\documentclass[conference]{IEEEtran}
\IEEEoverridecommandlockouts
\usepackage[noadjust]{cite}
\usepackage{marvosym}
\usepackage{amsmath,amssymb,amsfonts}
\usepackage{graphicx}
\usepackage{textcomp}
\usepackage{placeins}
\usepackage{dblfloatfix}
\usepackage{tabularx}
\usepackage{xcolor}
\usepackage{framed}
\usepackage{balance}
\usepackage{multirow}
\usepackage{mathrsfs}
\usepackage{amsmath}
\usepackage{siunitx}
\usepackage{pgfplots}
\usepackage{mathrsfs}
\usepackage{mathtools}
\usepackage{phoenician}
\usepackage{xspace}
\usepackage{relsize}
\usepackage{threeparttable}
\usepackage[font=small,skip=10pt]{caption}
\usepackage[linesnumbered,ruled,vlined]{algorithm2e}
\usepackage[hyphens]{url}
\usepackage{subcaption}
\usepackage{enumitem} 
\usepackage{booktabs}

\usepackage{listings}
\usepackage{float}
\usepackage{tikz}
\usepackage{enumitem}
\usepackage[colorinlistoftodos]{todonotes}
\usepackage{amsthm}

\definecolor{shadecolor}{RGB}{245,245,245}
\definecolor{framecolor}{RGB}{200,200,200}
\definecolor{darkorange}{rgb}{.60,.25,.00}
\definecolor{darkgreen}{RGB}{10,145,10}

\newtheoremstyle{compactthm}
  {2pt}   % Space above
  {2pt}   % Space below
  {}      % Body font
  {}      % Indent
  {\bfseries} % Head font
  {.}     % Punctuation after head
  {0.5em} % Space after head
  {}      % Head spec

\theoremstyle{compactthm}
\newtheorem{theorem}{Theorem}[section]
\newtheorem{lemma}[theorem]{Lemma}

\renewenvironment{proof}[1][Proof]
  {\par\noindent\textbf{#1. }\ignorespaces}
  {\hfill$\square$\par}

\usepackage{atbegshi}% http://ctan.org/pkg/atbegshi
\AtBeginDocument{\AtBeginShipoutNext{\AtBeginShipoutDiscard}}
\lstdefinestyle{mainstyle}{
	basicstyle=\fontsize{6.2pt}{5.8pt}\selectfont, %\linespread{0.8}\scriptsize,
	numbers=left,
	numbersep=2pt,
	breaklines=true,
  literate = {-}{-}1, %Keep hyphens from joining
	postbreak=\mbox{\textcolor{red}{$\hookrightarrow$}\space},
	breakindent=2pt
}
\lstdefinestyle{matmulstyle}{
	basicstyle=\fontsize{6.2pt}{5.8pt}\selectfont, %\linespread{0.8}\scriptsize,
	numbers=left,
	numbersep=2pt,
	breaklines=true,
  literate = {-}{-}1, %Keep hyphens from joining
	postbreak=\mbox{\textcolor{red}{$\hookrightarrow$}\space},
	breakindent=2pt,
}
\def\BibTeX{{\rm B\kern-.05em{\sc i\kern-.025em b}\kern-.08em
    T\kern-.1667em\lower.7ex\hbox{E}\kern-.125emX}}
\definecolor{darkorange}{rgb}{.60,.25,.00}

\definecolor{my_color2}{rgb}{.50,.55,.10}

\definecolor{my_color3}{rgb}{.50,.55,.10}

\definecolor{my_color}{rgb}{.20,.45,.90}

\definecolor{blue}{rgb}{0,0,1}

\definecolor{new_color}{rgb}{.3,.15,.90}

\begin{document}
%\title{On the Efficacy of Concurrent Queue Performances on Modern GPUs}
\title{Scalable Concurrent Queues for GPU}
%models}
%\title{Experinces with VITIS AI for }
\thanks{
The work detailed herein has been supported in part by NSF I/UCRC CNS-1822080 via the NSF Center for Space, High-performance, and Resilient Computing (SHREC).
}
% \thanks{
% anonymous
% }
\author{
\IEEEauthorblockN{Anonymous Authors}
\IEEEauthorblockA{Anonymous Institution}
}
\author{\IEEEauthorblockN{Pratheek Prakash Shetty}
\IEEEauthorblockA{Department of ECE\\
% Organization\\
% City, State, Country\\
Virginia Tech\\
Blacksburg, VA, USA\\
\texttt{pratheekps@vt.edu}}

\and
\IEEEauthorblockN{Thomas R. W. Scogland}
\IEEEauthorblockA{Lawrence Livermore\\
% Organization\\
% City, State, Country\\
National Laboratory\\
Livermore, CA, USA\\
\texttt{scogland1@llnl.gov}}

% \and
% % \IEEEauthorblockN{Atharva Gondhalekar}
% % \IEEEauthorblockA{Department of ECE\\
% % % Organization\\
% % % City, State, Country\\
% % Virginia Tech\\
% % Blacksburg, VA, USA\\
% % \texttt{atharva1@vt.edu}}

\and
\IEEEauthorblockN{Wu-chun Feng}
\IEEEauthorblockA{Department of CS and ECE\\
% Organization\\
% City, State, Country\\
Virginia Tech\\
Blacksburg, VA, USA\\
\texttt{wfeng@vt.edu}}}
\maketitle

%\SetWatermarkText{DRAFT\vspace*{32pt}\\NSF SHREC-privileged\\Do not distribute}
%\SetWatermarkScale{0.5}

%\input{text/Abstract}
\begin{abstract}

% Concurrent FIFO queues are well studied on CPUs but remain underexplored on modern GPUs, where SIMT execution, massive parallelism, and atomic contention reshape the design space. We present three linearizable GPU FIFO queues spanning from lock-free to wait-free guarantees. 

% G-WFQ-YMC ports the Yang–Mellor-Crummey wait-free queue using preallocated segments. G-LFQ is a bounded lock-free queue that uses warp-batched fast paths to maximize throughput. G-WFQ is a bounded wait-free queue that packs shared state into single-word 64-bit compare-and-swap operations while preserving linearizability and bounded memory.

% On AMD MI210 and MI300A, G-WFQ and G-LFQ deliver the best throughput overall, reaching giga-operations per second on MI210, while G-WFQ provides the most stable performance across workloads. Under imbalanced producer-consumer workloads, SFQ degrades by orders of magnitude, whereas G-WFQ-YMC, G-LFQ, and G-WFQ remain competitive. In BFS and wavefront ray tracing, G-WFQ and G-LFQ match or exceed specialized baselines with strong correctness guarantees and practical application-level performance.

% Modern HPC systems are increasingly accelerator-driven, making GPU-aware coordination mechanisms important for task distribution, load balancing, and resource utilization. Concurrent queues are well studied on CPUs, but remain less explored on modern GPUs, where SIMT execution, massive parallelism, and atomic contention reshape the design space.

Concurrent queues can significantly impact supercomputing performance by being critical bottlenecks for task distribution, load balancing, or resource utilization. As 
%high-performance computing 
HPC systems move beyond 10-million processor cores, the ability to rapidly move items between producer and consumer threads without excessive locking is essential for delivering efficient queues (i.e., preventing idle cores and maximizing utilization) and, in turn, achieving high parallel speedup.

While concurrent queues are well studied on CPUs, they remain largely unexplored on modern GPUs, where SIMT execution, massive parallelism, and atomic contention reshape the design space. We present three linearizable GPU concurrent queues spanning from lock-free to wait-free guarantees: (1) G-WFQ-YMC, an adaptation of Yang and Mellor-Crummey's wait-free queue using preallocated segments; (2) G-LFQ, a bounded lock-free queue that uses wave-batched fast paths to maximize throughput, and (3) G-WFQ, a bounded wait-free queue that packs shared state into 64-bit compare-and-swap operations while preserving linearizability and bounded memory.

\end{abstract}

\begin{IEEEkeywords}
concurrent queues, atomic operations, profiling
\end{IEEEkeywords}

%\atharva{note: Feel free to change section titles as/when appropriate}
%\atharva{TODO: Fix figure titles}
\vspace{-10pt}
\section{Introduction}
Concurrent queues are a cornerstone of shared-memory processing on CPUs, with decades of research establishing both a rich design space and well-understood notions of correctness and progress guarantees~\cite{michael1996,tsigas2001,
kogan-petrank-method,kogan-petrank-queue,yang-wfq,lcrq,scq,wcq,aggfunnels}. This spans lock-free compare-and-swap (CAS) based designs, scalable fetch-and-add (FAA) based rings, and practical wait-free constructions showing the diversity and maturity of CPU concurrent queue research. In contrast, the GPU concurrent queue space remains significantly less mature, even though modern exascale systems~\cite{el-capitan,frontier} are overwhelmingly accelerator-driven. For example, the Frontier supercomputer~\cite{frontier} couples each CPU with four AMD MI250X GPUs, while El Capitan's performance is dominated by the GPU portion of four MI300A APUs per node~\cite{el-capitan}. Thus, coordination mechanisms that execute on the GPU are essential primitives in today's heterogeneous machines, yet the GPU concurrent queue space has only a handful of designs~\cite{scogland2015,kerbl2018,bacq} with little attention to formal progress guarantees.

Two requirements dominate in concurrent queues:
%in these environments. 
%The first is 
(1) \emph{\textbf{semantics}}, i.e., FIFO ordering and linearizability~\cite{linearizability}, so that programs can reason about the order in which work is produced and consumed and (2)
%The second is 
\emph{\textbf{progress}}, i.e., guarantees about completion and forward progress under contention. On CPUs, both lock-free and wait-free queue designs have been explored in depth with a well-developed understanding of their correctness, performance, and boundedness~\cite{michael1996,tsigas2001,lcrq,scq,wcq}. On GPUs, in contrast, prior work has focused more on practical implementations, blocking designs, linearizability, or throughput-oriented work distribution~\cite{scogland2015,kerbl2018,bacq}, with little attention to explicit non-blocking progress guarantees or wait-free semantics. As a result, the field still lacks a GPU-aware wait-free concurrent queue with explicit, theorem-grounded progress guarantees and evaluation across contention regimes and applications.

% and the field lacks a verified, GPU-aware, wait-free concurrent queue with theorem-grounded progress guarantees and an evaluation across contention regimes and applications.
% \wu{There are some citations for concurrent queues on CPU but not for GPU (despite saying ``the literature has mostly focused ...''). I added SFQ ... please add others for the GPU (as well as more for the CPU).}

To address this gap, we first adapt Yang and Mellor-Crummey's CPU wait-free queue (WFQ)
%port the WFQ from the CPU setting~\cite{yang-wfq} to the GPU and adapt it 
to GPU allocation, synchronization, and memory-ordering constraints, i.e., \textbf{G-WFQ-YMC}. Second, we design and implement two GPU-aware concurrent queues with strong progress guarantees and then study their behavior on the AMD MI210 GPU and AMD MI300A APU, namely
%CDNA2 (MI210) and CDNA3 (MI300A).

% Please look into bettering the way i write about BWFQ
\begin{itemize}
    \item \textbf{G-LFQ (GPU Lock-Free Queue).} G-LFQ derives from the scalable circular queue (sCQ)~\cite{scq} ring structure but changes the fast path so that one wavefront\footnote{Throughout this paper we use AMD terminology (i.e., wave or wavefront); \emph{wavefront} is equivalent to NVIDIA's \emph{warp}.} leader reserves positions for the wave with a batched fetch-and-add (FAA). This yields a lock-free GPU queue without slow-path overhead. %(weaker than wait-free but still non-blocking).
    \item \textbf{G-WFQ (GPU Wait-Free Queue).} G-WFQ extends that structure with head and tail metadata arranged to support a bounded slow path and a wait-free proof argument. G-WFQ uses single-width 64-bit compare-and-swap atomics as double-width or (CAS2) atomics are not available to most current GPUs. 
\end{itemize}

To understand and validate correctness (i.e., linearizability), we use device-recorded histories checked by Porcupine~\cite{porcupine-linear}, a linearizability checker, together with targeted tests for FIFO behavior. We then state the assumptions under which our wait-freedom argument holds on GPUs.
%  the last line needs to be redone

% Faster linearizability checking via
% P-compositionality⋆

We evaluate the queues with two fixed-duration throughput micro-benchmarks and two applications. The first micro-benchmark is \textbf{balanced}: every thread performs exactly one enqueue followed by one dequeue, so each thread does an equal amount of work. The second is a \textbf{split} micro-benchmark, where all threads perform only enqueues or dequeues with varying producer fractions to create an asymmetric load. The two applications are level-synchronous breadth-first search (BFS) and a tile-based wavefront ray tracer. We then analyze micro-benchmark performance using metrics collected with \texttt{rocprofv2}, in particular \texttt{WAIT/op}, the normalized wave stall fraction per successful queue operation, and \texttt{VALU/op}, the number of vector ALU instructions per successful queue operation.

% We then relate these results to architectural metrics collected with rocprofv2 on AMD MI210 and MI300A GPUs. To keep the analysis readable, we rename vendor-specific counters with descriptive aliases and use those names consistently throughout the paper. Examples include 

% Each run includes a warm up period followed by a fixed measurement interval, and we report successful enqueue and dequeue throughput as well as failed and empty operations in Mops/s.

Our contributions are as follows.
\begin{itemize}
    \item \textbf{The first GPU wait-free queue.} We present G-WFQ, a bounded GPU-aware wait-free queue with explicit theorem-grounded progress guarantees, and G-LFQ, a bounded GPU lock-free design.
    
    % We implement and evaluate the G-WFQ, the G-LFQ and a port of G-WFQ-YMC; To the best of our knowledge, this is the first GPU queue to provide the strongest known progress guarantee of wait-free (G-WFQ), 
    
    \item \textbf{Verifiable correctness with proofs.} We prove linearizability and progress properties for G-WFQ and G-LFQ, and validate FIFO behavior using Porcupine-based linearizability checking and device-side tests.
    
    % We establish correctness through proofs and a verification pipeline that confirms linearizability and FIFO behavior using the Porcupine tool and targeted tests. This is the first wait-free GPU queue that has been demonstrated to be linearizable.

    \item \textbf{Performance evaluation and analysis of GPU concurrent queues.}  We evaluate our GPU concurrent queues against the current state of the art, namely SFQ (Scogland-Feng queue for GPU) and G-WFQ-YMC (Yang \& Mellor-Crummey wait-free queue for GPU) using fixed-duration micro-benchmarks, level-synchronous BFS, and wavefront ray tracing, and explain the performance trends using hardware profiling metrics.
    
    % against baselines (SFQ, G-WFQ-YMC) across synthetic kernels and applications such as level-synchronous BFS, wavefront-style ray tracing and analysis by profiling using hardware counters to provide us the explanation as to why our designs behave differently, particularly under contentious workloads.
\end{itemize}

\section{Related Work}
%\label{sec:related_work}
\subsection{CPU Concurrent Queues}
Concurrent queue design on CPUs spans classic algorithms like CAS-based linked queues~\cite{michael1996, tsigas2001}, FAA-based ring queues~\cite{lcrq}, and wait-free fast-path/slow-path constructions~\cite{yang-wfq}. Michael and Scott's queue (MSQ) is considered to be the seminal work in the field and still remains the baseline for correctness and portability~\cite{michael1996}. The LCRQ explored a fetch-and-add (FAA) based queue to reduce contention on shared head and tail updates, but it relies on linked ring segments and (CAS2) synchronization~\cite{lcrq}, later work also presented a variant that did not use the (CAS2)~\cite{lcrq-no-cas2}. Nikolaev introduced a scalable circular queue (sCQ), a bounded lock-free ring design that preserves ordering semantics while avoiding some of LCRQ's portability and memory-management limitations~\cite{scq}.

CPU queues that achieved wait-free guarantees followed a related but distinct line of work. The first fast-path/slow-path methodology developed to convert lock-free structures into wait-free ones through helping was introduced by Kogan and Petrank~\cite{kogan-petrank-method}. They developed a practical wait-free queue based on the Michael-Scott structure~\cite{kogan-petrank-queue}, while Yang and Mellor-Crummey (YMC) later showed that a wait-free queue can be as fast as FAA-based lock-free alternatives by combining an FAA-based fast path with helping through per-thread request records~\cite{yang-wfq}. More recently, Nikolaev and Ravindran, revisited the problem from a bounded-memory perspective, arguing that practical wait-freedom should not rely on unbounded growth or deferred reclamation --- a flaw they identified in YMC's design --- and introduced wCQ, built atop the sCQ structure~\cite{wcq}.

Recent work revisited how to make FAA-based synchronization scale better under contention. Aggregating Funnels show that software aggregation can substantially reduce FAA hot-spot pressure and improve queue performance~\cite{aggfunnels}. Our G-LFQ uses this direction as inspiration in its wavefront or wave-batched fast path, but applies it in a bounded GPU ring with explicit correctness arguments.

% Our G-WFQ is most directly in this lineage: it follows the bounded-ring and helping spirit of SCQ and wCQ, but adapts it to GPU execution and native 64-bit atomic constraints.

\subsection{GPU Concurrent Queues} 
%and Work Distribution on GPUs}

The GPU queue literature is much sparser than the CPU queue literature. The Scogland-Feng Queue (SFQ) is one of the earliest and most widely cited GPU concurrent queues. It is a bounded, linearizable queue built around ticketing in a fixed-size ring and serves as a standard baseline for GPU queue studies~\cite{scogland2015}. Scogland and Feng emphasize throughput rather than strong progress guarantees, providing a high-throughput blocking interface together with a separate non-waiting interface for cases where waiting is undesirable. 

% SFQ is therefore an important GPU baseline, but not a wait-free design.

The Broker Queue explores GPU queuing through a centralized broker that batches and redistributes operations to reduce contention~\cite{kerbl2018}. For applications like work distribution and path-tracing, Kerbl et al.~\cite{kerbl2018} present a faster non-linearizable variant that trades explicit ordering for higher throughput. Since our study is focused on explicit progress guarantees, this distinction is important. 

% maybe add BACQ or Agile but Agile is a workshop paper

\subsection{Queues, Worklists, and Application Context}

Queues and queue-like worklists appear naturally in irregular GPU applications. Thus, we  evaluate our designs against Gunrock for BFS~\cite{gunrock}, asking whether stronger queue semantics can remain competitive in realistic frontier-management and work-distribution settings. In ray tracing and path tracing, the role of queues is more subtle. Modern GPU ray tracing relies on BVH traversal and specialized traversal engines or software traversal kernels~\cite{optix}. Our queue-based ray-tracing benchmark is not intended to replace BVH traversal itself. Rather, the benchmark  targets the work-distribution layer around ray generation, staging, and re-enqueueing. This queue-as-work-distribution layer framing is consistent with prior wavefront and compaction-based path-tracing work, where active rays are compacted or reordered between stages to improve efficiency~\cite{megakernels,active-compaction,ray-reordering,moonray}. We therefore use stream-compaction as a baseline for queue-driven work management, not as a claim that queues replace the full ray-tracing pipeline.

Taken together, prior GPU queue work has emphasized practical scalability and work distribution, while explicit strong progress guarantees have remained significantly less explored than in the CPU literature.

\section{Concurrent Queue Design on GPU}
We study three GPU queue designs. G-WFQ-YMC is our GPU adaptation of Yang and Mellor-Crummey's wait-free queue~\cite{yang-wfq}. G-LFQ is a bounded lock-free GPU queue derived from an sCQ-like ring structure~\cite{scq}. G-WFQ extends the same bounded ring design with a slow path inspired by wCQ~\cite{wcq}, adapted to GPUs using native 64-bit atomics in place of 128-bit compare-and-swap or (CAS2). The following subsections describe each design in detail.

% adapted CPU wait-free design to SIMT execution and high contention environments. 

\label{sec:background}

% ============================================================
% G-WFQ-YMC (GPU port): compact implementation description
% ============================================================

\subsection{\textbf{G-WFQ-YMC: GPU adaptation of Yang and Mellor-Crummey's wait-free queue}}
\label{subsec:G-WFQ-YMC}

\paragraph{\textbf{Overview}}
G-WFQ-YMC is our GPU adaptation of the CPU wait-free queue of Yang and Mellor-Crummey~\cite{yang-wfq}, used as a reference wait-free design.
%, and we do not claim the underlying algorithmic contribution. 
The queue keeps the original fast-path/slow-path organization: an operation first attempts a short \textsf{FAA}-based fast path, and if that does not succeed, it publishes a per-thread enqueue or dequeue request that can be completed by helpers.

\paragraph{\textbf{GPU adaptation}}
The CPU implementation grows a linked list of segments dynamically and reclaims retired segments with hazard-pointer-style cleanup. In our GPU adaptation, we instead pre-allocate a segment pool on the device and replace dynamic segment growth during execution with direct arithmetic lookup into the pre-allocated pool. This avoids device-side allocation and reclamation while preserving the logical segment structure of the original design. The queue logic, helping structure, and linearization behavior follow Yang and Mellor-Crummey's algorithm~\cite{yang-wfq}. We present G-WFQ-YMC as a GPU-adapted baseline rather than a new queue design.

\paragraph{\textbf{Progress and limitation}}
Under the assumptions of the original work~\cite{yang-wfq}, G-WFQ-YMC remains wait-free: once a slow-path request is published, helpers can complete it in bounded algorithmic work. However, as discussed in the bounded-memory critique motivating wCQ~\cite{wcq}, the YMC-style queue is not bounded-memory in the strict sense, since the logical structure grows by linked segments~\cite{yang-wfq} rather than operating within a fixed ring. In our GPU experiments, we pre-allocate enough segments ahead of time, but this does not change that underlying distinction.

% \paragraph{\textbf{Why we include it}}
% We include G-WFQ-YMC to represent a faithful GPU adaptation of a prior wait-free queue, and to separate inherited algorithmic ideas from the new GPU-specific designs introduced later in this paper.

\subsection{\textbf{G-LFQ: GPU Lock-Free Queue}}
\label{subsec:glfq}

\paragraph{\textbf{Design goal}}
G-LFQ is the lock-free version of our GPU queue family. It uses the same bounded-ring setting as the sCQ~\cite{scq} with a ring of size $2n$, a threshold check for empty, and an outer indirection layer that moves indices rather than payloads directly. We change how tickets are reserved and how the slot state is packed for GPU execution. G-LFQ replaces the per-thread fetch-and-add on the hot counters with wave-batched reservation and stores each live slot in a single 64-bit word. The proof below therefore focuses on the modified ticket-reservation and packed-slot mechanisms and otherwise relies on the arguments made in sCQ~\cite{scq}.

\paragraph{\textbf{Data structure}}
The inner ring has $2n$ physical slots and logical capacity $n$. Each slot is a single 64-bit word.
% \[
% \texttt{Entry}=\langle \texttt{Cycle}_{b_c},\ \texttt{Safe}_{1},\ \texttt{Index}_{32}\rangle .
% \]
%
% \begin{figure}[H]
%   \vspace{-15pt} % Reduces space above figure
%   \centering
%   \includegraphics[width=\columnwidth]{figures/glfq_entry.png}
%   \vspace{-30pt} % Reduces space between image and caption
%   \caption{Entry word for G-LFQ}
%   \label{fig:glf_entry}
%   \vspace{-10pt} % Reduces space below caption
% \end{figure}
\texttt{Index} is either a payload index \texttt{$\bot$} for an empty slot or \texttt{$\bot_c$} for a consumed slot. As in sCQ~\cite{scq}, the queue state consists of monotonically increasing \texttt{Head} and \texttt{Tail} counters, a \texttt{Threshold} used by empty dequeues, and the entry array itself. 

% The outer MPMC queue uses two such rings—one for allocated indices and one for free indices—together with a fixed payload array.

\begin{figure}[t]
%  \vspace{-6pt} % Reduces space above figure
  \centering
  \includegraphics[width=0.8\columnwidth]{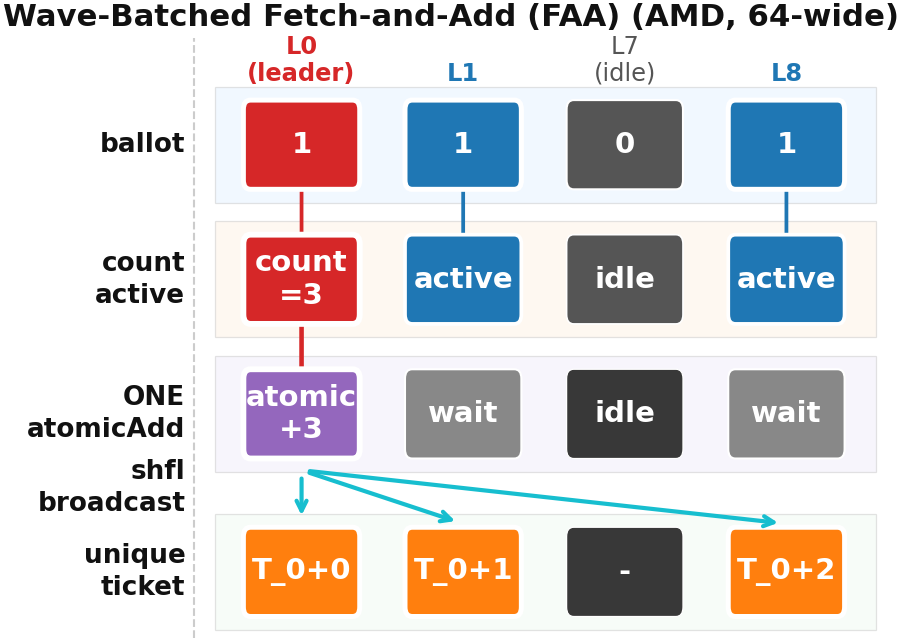}
%  \vspace{-12pt} % Reduces space between image and caption
  \caption{Wave-batching reduces atomic contention}
  \label{fig:warp_final}
  \vspace{-6pt} % Reduces space below caption
\end{figure}

\paragraph{\textbf{Wave-batched ticket reservation}}
A naive FAA-based queue performs one fetch-and-add per active thread on \texttt{Head} or \texttt{Tail}. On a GPU, that needlessly concentrates contention on the same counter word. G-LFQ instead batches reservations within a wavefront. Active lanes first form a mask with a ballot instruction. One leader performs a single fetch-and-add by the number of active lanes, broadcasts the returned base ticket, and each participating lane adds its rank within the mask. The result is a consecutive block of tickets, but only one global atomic is issued for the whole group, as visualized by Fig.~\ref{fig:warp_final}.

For the pseudocode below, let
\[
\textsc{slot}(t)=(t \bmod 2n), \quad
\textsc{cycle}(t)=\left\lfloor \frac{t}{2n}\right\rfloor \bmod 2^{b_c}.
\]
We denote \textsf{FAA} for fetch-and-add and \textsf{CAS} for compare-and-swap. The operation \textsc{Consume} is an atomic update that marks the slot's index field as \texttt{$\bot_c$} without changing the other packed fields.

\begin{algorithm}[!t]
\caption{G-LFQ fast path}
\label{alg:glfq}
\small
\SetAlgoLined
\DontPrintSemicolon
\SetKwFunction{FAA}{FAA}
\SetKwFunction{CAS}{CAS}
\SetKwFunction{Ballot}{Ballot}
\SetKwFunction{Popc}{Popcount}
\SetKwFunction{Shfl}{Shuffle}
\SetKwFunction{Ffs}{FirstSetBit}
\SetKwProg{Fn}{function}{:}{}

\Fn{\textsc{WaveFAA}$(C,\ active)$}{
  $mask \gets \Ballot(active)$\;
  \If{$mask = 0$}{\Return $\bot$}
  $count \gets \Popc(mask)$\;
  $leader \gets \Ffs(mask)-1$\;
  \If{$lane = leader$}{
    $base \gets \FAA(C,count)$\;
  }
  $base \gets \Shfl(base,leader)$\;
  $rank \gets \Popc(mask \cap \texttt{lower\_lanes}(lane))$\;
  \Return $base + rank$\;
}

\Fn{\textsc{TryEnq}$(x)$}{
  $t \gets \textsc{WaveFAA}(Tail,\textnormal{active})$\;
  $j \gets \textsc{slot}(t)$,\quad $c \gets \textsc{cycle}(t)$\;
  $E \gets Entry[j]$\;
  \If{$E.\texttt{Cycle} < c$ \textbf{and}
      $(E.\texttt{Safe} \lor Head \le t)$ \textbf{and}
      $E.\texttt{Index} \in \{\texttt{$\bot$},\texttt{$\bot_c$}\}$}{
    \If{$\CAS(Entry[j],E,\langle c,1,x\rangle)$ succeeds}{
      reset \texttt{Threshold} to $3n-1$\;
      \Return \textsc{success}\;
    }
  }
  \Return \textsc{retry}\;
}

\Fn{\textsc{TryDeq}()}{
  \If{$Threshold < 0$}{\Return \texttt{EMPTY}}
  $h \gets \textsc{WaveFAA}(Head,\textnormal{active})$\;
  $j \gets \textsc{slot}(h)$,\quad $c \gets \textsc{cycle}(h)$\;
  $E \gets Entry[j]$\;
  \If{$E.\texttt{Cycle}=c$ \textbf{and} $E.\texttt{Index} \notin \{\texttt{$\bot$},\texttt{$\bot_c$}\}$}{
    \textsc{Consume}$(Entry[j])$\;
    \Return $E.\texttt{Index}$\;
  }
  \If{$E.\texttt{Index} \in \{\texttt{$\bot$},\texttt{$\bot_c$}\}$}{
    try $\CAS(Entry[j],E,\langle c,E.\texttt{Safe},\texttt{$\bot$}\rangle)$\;
  }
  \Else{
    try $\CAS(Entry[j],E,\langle E.\texttt{Cycle},0,E.\texttt{Index}\rangle)$\;
  }
  \If{$Tail \le h+1$}{
    catch up \texttt{Tail} to at least $h+1$\;
    decrement \texttt{Threshold}$\;$ and return \texttt{EMPTY}\;
  }
  \If{decrementing \texttt{Threshold} makes it negative}{
    \Return \texttt{EMPTY}\;
  }
  \Return \textsc{retry}\;
}

\end{algorithm}
\vspace{-2pt}

\paragraph{\textbf{Linearization points}}
A successful enqueue linearizes at the successful 64-bit CAS that installs the new entry in its target slot. A successful dequeue linearizes at \textsc{Consume}, which atomically marks that slot as consumed. An empty dequeue linearizes at the first empty observation for its claimed head ticket: either \texttt{Tail} has not advanced beyond $h+1$, or the threshold update proves that no reachable element exists for that ticket.

\begin{lemma}[\textbf{WaveFAA preserves ticket order}]
\label{lem:wavefaa-order}
Within one call to \textsc{WaveFAA}, the active mask is fixed. The leader performs exactly one fetch-and-add by the number of active lanes, which reserves a contiguous ticket interval. Each active lane adds a distinct prefix rank within that mask, so the returned tickets are pairwise distinct and consecutive. Across different calls, the global counter remains monotonic because fetch-and-add is atomic. Therefore, \textsc{WaveFAA} produces exactly the same total ticket order as per-thread fetch-and-add; it only changes how that order is obtained~\ref{alg:glfq}.
\end{lemma}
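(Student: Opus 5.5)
The argument splits into an intra-call part and an inter-call part, followed by a simulation step that maps batched reservations onto an execution of per-thread fetch-and-add.

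\emph{Intra-call.} Fix one invocation of \textsc{WaveFAA} by a wavefront. Let $M$ be the mask returned by the ballot. Under SIMT lockstep it is a single value broadcast to every active lane, so all lanes compute the same $count=|M|$ and the same $leader$, namely the lowest set bit of $M$. Only the lane equal to $leader$ executes the FAA. Because FAA is atomic, it returns some value $b$ and sets $C$ to $b+|M|$ in one indivisible step. Hence the half-open interval $[b,b+|M|)$ is reserved exclusively: no other FAA on $C$ can return a value in that range. The shuffle gives every lane the same $b$. The map $\ell\mapsto \mathrm{rank}(\ell)=|M\cap\texttt{lower\_lanes}(\ell)|$ is a bijection from $M$ onto $\{0,\dots,|M|-1\}$, since it is order-preserving in lane id. Therefore the tickets $b+\mathrm{rank}(\ell)$ are pairwise distinct and exactly cover the interval.

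\emph{Inter-call.} All modifications of $C$ are FAAs with positive increments, either leader FAAs or, if mixed, single-lane calls with $|M|=1$. They are therefore totally ordered by the atomicity and coherence order on the single word $C$. If the leader FAAs are ordered $f_1,f_2,\dots$ with increments $k_1,k_2,\dots$, then induction gives that $f_i$ returns $C_0+\sum_{j<i}k_j$. So the reserved intervals are consecutive, disjoint, and tile $[C_0,\infty)$ with no gaps. It follows that $C$ is monotone and every ticket is handed out exactly once.

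\emph{Simulation.} Take a real execution and replace each leader FAA $f_i$ by $|M_i|$ per-thread FAAs by one each, issued in lane-rank order and executed atomically at the point of $f_i$. This yields a legal execution of the per-thread scheme in which every lane receives the same ticket and $C$ passes through a superset of the same values. The interleaving is legal because no other operation on $C$ can intervene: the leader's FAA is one atomic step. Conversely, the per-thread execution's total order on tickets, namely increasing ticket value, equals the one induced here. This is the sense of ``same total ticket order'', so the downstream sCQ arguments, which depend only on tickets being unique and increasing in the linearization order of their reservations, apply unchanged.

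\emph{Main obstacle.} The delicate point is the simulation step. In the batched version, the lanes of a wave all ``obtain'' their tickets at the one instant of $f_i$, but they use them later and concurrently. I would argue this is harmless because the per-thread scheme already permits arbitrary delay between a thread's FAA and its next step. Collapsing $|M_i|$ adjacent FAAs into one point is just a particular schedule, and nothing in the sCQ proof depends on ticket acquisition being spread out. I would also state explicitly the GPU assumption that the ballot, shuffle and rank computation observe the same mask, that is, no lane enters or exits between the ballot and the shuffle. Convergent execution within the call guarantees this.
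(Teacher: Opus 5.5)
Your argument follows the same route as the paper's: a single atomic FAA by $|M|$ plus the strictly increasing prefix rank gives a contiguous block of distinct tickets, and atomicity of the counter gives monotonicity across calls. Your simulation step, which collapses $|M_i|$ unit FAAs into the instant of $f_i$, makes precise the paper's final claim of ``the same total ticket order as per-thread fetch-and-add''. The paper simply asserts that claim, and your explicit convergence assumption on the mask is likewise a useful addition.

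One intermediate claim is false, however: not every modification of the counter is an FAA. When $C=\texttt{Tail}$, \textsc{TryDeq} in Algorithm~\ref{alg:glfq} executes ``catch up \texttt{Tail} to at least $h+1$''. As in sCQ, this is a CAS that can jump \texttt{Tail} forward by more than one. In G-WFQ, which reuses this fast path, the packed global word is also advanced by the CAS in \textsc{SlowFAA}. Consequently:
\begin{itemize}
\item the closed form ``$f_i$ returns $C_0+\sum_{j<i}k_j$'' fails;
\item the intervals do not tile $[C_0,\infty)$;
\item it is not true that every ticket is handed out exactly once, because tickets skipped by a catch-up are never issued.
\end{itemize}

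What you actually need, and what still holds, is that every write to $C$ is non-decreasing, since catch-up only moves \texttt{Tail} forward. That alone gives exclusivity of $[b,b+|M|)$, disjoint increasing intervals, and at-most-once issuance. In the simulation, leave the catch-up CASes untouched. Because nothing intervenes between the collapsed unit FAAs, every read and CAS on $C$ sees the same value in both executions. The per-thread execution is therefore still legal and induces the same ticket order, gaps included, which is also what per-thread sCQ produces.
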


\begin{lemma}[\textbf{Reduced-width cycle tags are sufficient for reachable states}]
\label{lem:glfq-cycle-tags}
Let $R=2^{b_c}$ be the cycle range. A ticket $t$ maps to slot $t \bmod 2n$ and cycle $\lfloor t/(2n)\rfloor \bmod R$. Because a physical slot can only be reused after another full wrap of $2n$ tickets, the queue compares only live cycle states whose true distance is bounded by the number of outstanding wraps on that slot. Under the paper's queue configurations, the live skew remains strictly below $R/2$, so modular comparison agrees with the true cycle order on all reachable fast-path states. Hence, the packed cycle field preserves the slot ordering needed by the sCQ-style ring argument.
\end{lemma}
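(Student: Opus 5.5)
The plan is to reduce the claim to a bound on the \emph{true} (unreduced) cycle distance between any two cycle values that the algorithm ever compares. Then we invoke the standard fact about modular serial-number arithmetic: if two integers $a,b$ satisfy $|a-b|<R/2$, then comparing $a \bmod R$ and $b \bmod R$ by the signed difference $((a-b)\bmod R)$ interpreted in $(-R/2,R/2]$ recovers the true order. So we define the comparison used in Algorithm~\ref{alg:glfq} (the tests $E.\texttt{Cycle}<c$ and $E.\texttt{Cycle}=c$) to be this signed modular comparison. It then suffices to show that in every reachable state, whenever a thread holding ticket $t$ reads $Entry[t \bmod 2n]$, the true cycle $\hat c_E$ stored there and the true cycle $\hat c=\lfloor t/(2n)\rfloor$ differ by less than $R/2$.

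The first step is to attach to each slot a \emph{ghost} unbounded cycle $\hat c_E$. This is the full value $\lfloor t'/(2n)\rfloor$ of the ticket $t'$ whose CAS (enqueue install or dequeue-side cycle advance) last wrote the slot. Every write to a slot comes from a thread holding a ticket mapping to that slot, so $\hat c_E$ is always some $\lfloor t'/(2n)\rfloor$ with $t'\equiv t \pmod{2n}$. By Lemma~\ref{lem:wavefaa-order}, tickets on \texttt{Head} and \texttt{Tail} are handed out exactly as by per-thread FAA. Hence the ticket holders at any moment form a set of distinct integers drawn from an interval ending at the current counter value.

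The second step bounds the skew. A stale writer could have ticket $t'$ far below the current counter only if it has been delayed while many wraps occurred. Let $W$ be the maximum number of tickets concurrently outstanding on one counter, i.e.\ issued but whose operation has not finished its slot access. We have $W \le$ the number of resident threads $P$, since each thread holds at most one ticket per counter at a time. The tickets of any two live accesses to the same slot then differ by less than $W+2n\cdot(\text{one wrap})$. Dividing by $2n$ gives a cycle skew of at most $\lceil P/(2n)\rceil+1$. The same applies across \texttt{Head}/\texttt{Tail} interactions, because \texttt{Tail} is caught up to at least $h+1$ and the threshold of $3n-1$ bounds how far \texttt{Head} can run ahead of \texttt{Tail} without returning \texttt{EMPTY}. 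This is the sCQ argument, which we cite rather than redo. The configuration assumption is then exactly $\lceil P/(2n)\rceil+2 < R/2$, which we verify for the paper's parameters.

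The main obstacle is the middle step. A thread can be descheduled between obtaining $t$ and reading or CASing $Entry[j]$, so the ``outstanding tickets'' bound must hold even across preemption. I would handle this by observing that the skew counts tickets issued, not time elapsed, and that issued tickets are bounded by live threads. A delayed thread still holds its unique ticket, and no other thread can obtain more than one ticket concurrently. Once that invariant is in place, the modular-comparison fact shows that every cycle test in \textsc{TryEnq} and \textsc{TryDeq} evaluates identically to its unbounded counterpart. Therefore the reachable executions coincide with those of the unbounded-cycle sCQ ring, and the lemma follows.
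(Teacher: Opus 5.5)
\textbf{There is a genuine gap in the skew bound.} Your outline matches the paper's: reduce to serial-number arithmetic, then bound the true distance between a stored tag and any ticket cycle compared against it. The problem is the step that is supposed to produce that bound.

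From \emph{each thread holds at most one ticket per counter} you correctly get that at most $P$ tickets are outstanding at any instant. You then use $W$ as if it bounded the \emph{spread} of the outstanding ticket values. Your fix for preemption, \emph{issued tickets are bounded by live threads}, is false: the number of live tickets is bounded, but their diameter is not.

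Here is a concrete execution. A wave obtains $t$ from \textsc{WaveFAA} on \texttt{Tail} and stalls before loading $Entry[t \bmod 2n]$, while the other waves keep completing enqueue/dequeue pairs.
\begin{itemize}
\item At every instant at most $P$ tickets are outstanding, yet \texttt{Tail} and \texttt{Head} can advance by $2nK$ for any $K$.
\item The dequeuer holding ticket $t$ advances the empty slot to cycle $c=\lfloor t/(2n)\rfloor$, and later enqueuers reuse it in each subsequent cycle.
\item On resuming, the stalled wave compares $c$ against a stored tag of true cycle $c+K$.
\item When $K \bmod R \in (R/2,R)$, the test $E.\texttt{Cycle}<c$ succeeds. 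If the slot is safe and empty, the stale CAS installs $\langle c,1,x\rangle$ into a slot already reused for a much later cycle.
\end{itemize}
The mirror case arises when a slot holds an element whose matching dequeuer is stalled. The stored tag stays frozen while later tickets for that slot run arbitrarily far ahead. With $K\equiv 0 \pmod R$, a later dequeuer sees a matching tag and consumes an element that is not its own.

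Lock-freedom permits exactly these executions. G-LFQ has no helping or throttling that stops the counters while a ticket is held. The sCQ threshold/catch-up argument you cite only relates \texttt{Head} to \texttt{Tail}; it says nothing about how far a stalled holder's ticket can lag the current counters. So $\lceil P/(2n)\rceil+1$ is not a valid skew bound, and no condition on $P$ alone can yield the lemma.

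\textbf{What is missing.} You need an explicit hypothesis on the quantity the lemma calls the number of outstanding wraps on a slot. That is a bound $\Delta$ on how many tickets can be issued on a counter during the window in which a slot access or a stored element is pending:
\begin{itemize}
\item between a thread's \textsc{WaveFAA} and its last access to the corresponding slot, and
\item between an element's installation and its consumption.
\end{itemize}
The paper does not derive this from the thread count. It simply asserts that the live skew stays below $R/2$ \emph{under the paper's queue configurations}, which amounts to a bounded-delay/configuration assumption on the resident waves. For G-WFQ, the analogous bound in Lemma~\ref{lem:gwfq-modular-cycle} depends on the help delay $D$, a mechanism G-LFQ lacks.

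If you state $\Delta$ as a hypothesis, your ghost-cycle and serial-number steps go through unchanged. The sufficient condition then becomes roughly $\Delta/(2n)+2<R/2$ rather than a condition on $P$.
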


% \paragraph{\textbf{Lemma 1 (WaveFAA preserves ticket order).}}
% \label{lem:wavefaa-order}
% Within one call to \textsc{WaveFAA}, the active mask is fixed. The leader performs exactly one fetch-and-add by the number of active lanes, which reserves a contiguous ticket interval. Each active lane adds a distinct prefix rank within that mask, so the returned tickets are pairwise distinct and consecutive. Across different calls, the global counter remains monotone because fetch-and-add is atomic. Therefore \textsc{WaveFAA} produces exactly the same total ticket order as per-thread fetch-and-add; it only changes how that order is obtained.

% \paragraph{\textbf{Lemma 2 (reduced-width cycle tags are sufficient on reachable states).}}
% \label{lem:glfq-cycle-tags}
% Let $R=2^{b_c}$ be the cycle range. A ticket $t$ maps to slot $t \bmod 2n$ and cycle $\lfloor t/(2n)\rfloor \bmod R$. Because a physical slot can only be reused after another full wrap of $2n$ tickets, the queue compares only live cycle states whose true distance is bounded by the number of outstanding wraps on that slot. Under the paper's queue configurations, that live skew remains strictly below $R/2$, so modular comparison agrees with the true cycle order on all reachable fast-path states. Hence the packed cycle field preserves the slot ordering needed by the SCQ-style ring argument.

\begin{theorem}[\textbf{G-LFQ is linearizable and FIFO}]
\end{theorem}
\begin{proof}
Apart from the two changes to the ticket reservation, G-LFQ follows the same bounded-ring discipline as sCQ~\cite{scq}.
%: the same ticket-to-slot mapping, the same threshold-based empty detection, and the same slot-state evolution for enqueue, consume, and reuse. 
By Lemma~\ref{lem:wavefaa-order}, ticket reservation in G-LFQ is observationally equivalent to the sequential ticket order assumed by sCQ. By Lemma~\ref{lem:glfq-cycle-tags}, the reduced-width cycle field preserves the same live-slot ordering as the unbounded counter view on all reachable states. Since each slot update is performed by a single 64-bit atomic operation, the queue exposes the same slot semantics to concurrent threads as the sCQ ring. Therefore, G-LFQ is linearizable and preserves FIFO order.
\end{proof}

% \paragraph{\textbf{Theorem 1 (linearizability and FIFO ordering).}}
% Modulo the two changes above, G-LFQ follows the same bounded-ring discipline as SCQ: the same ticket-to-slot mapping, the same threshold-based empty detection, and the same slot-state evolution for enqueue, consume, and reuse. By Lemma~1, ticket reservation in G-LFQ is observationally equivalent to the sequential ticket order assumed by SCQ. By Lemma~2, the reduced-width cycle field preserves the same live-slot ordering as the unbounded counter view. Since each slot update is performed by a single 64-bit atomic operation, the queue exposes the same slot semantics to concurrent threads as the SCQ ring. Therefore G-LFQ is linearizable and preserves FIFO order.

% \paragraph{\textbf{Theorem 2 (lock-freedom).}}
% The lock-freedom argument is the same as for SCQ once the ticket correspondence is fixed. If an enqueue or dequeue retries forever, then it must do so because some shared state changed first: another thread claimed the slot, consumed the slot, advanced \texttt{Head} or \texttt{Tail}, or changed the threshold-relevant state for the observed ticket. Thus infinite retries by one thread imply infinitely many successful state changes by other threads. G-LFQ is therefore lock-free, though not wait-free.

\begin{theorem}[\textbf{G-LFQ is lock-free}]
\end{theorem}
\begin{proof}
The lock-freedom argument is the same as for sCQ once the ticket correspondence is fixed. If an enqueue or dequeue retries forever, then it must do so because some shared state changed first: another thread claimed the slot, consumed the slot, advanced \texttt{Head} or \texttt{Tail}, or changed the threshold-relevant state for the observed ticket. Thus infinite retries by one thread imply infinitely many successful state changes by other threads. Therefore, G-LFQ is lock-free, though not wait-free.
\end{proof}

\vspace{-6pt}
\subsection{\textbf{G-WFQ: GPU Wait-Free Queue}}
\label{subsec:G-WFQ}

\paragraph{\textbf{Design goal}}
G-WFQ is our bounded wait-free GPU ring with the same bounded $2n$-slot organization as G-LFQ, but adds a cooperative slow path so that an operation that fails repeatedly on the fast path can still complete after bounded helping. At a high level, the design follows the fast-path/slow-path structure of wCQ~\cite{wcq}: retries are bounded on the fast path, requests are published in fixed per-thread records, and helpers cooperate on the same logical round. We use \textbf{patience constants} to bound the number of fast-path retries before an operation publishes a request and enters the cooperative slow path. The \textbf{help delay} $D$ controls how frequently each thread checks for pending peer requests: a thread inspects one peer record every $D$ operations. Together, patience and help delay determine the worst-case bound on slow-path completion. The main difference is architectural. Instead of relying on CAS2-style shared state as in the CPU setting, G-WFQ packs the shared queue state into native 64-bit GPU words and uses only pre-allocated device memory.

\paragraph{\textbf{Data structure}}
The inner queue is a bounded ring with $2n$ physical slots and logical capacity $n$. Each slot stores all shared entry state in one 64-bit word, as shown in Fig.~\ref{fig:gwf_entry}.
% \[
% \texttt{Entry}=\langle \texttt{Note}_{8},\ \texttt{Cycle}_{8},\ \texttt{Safe}_{1},\ \texttt{Enq}_{1},\ \texttt{Index}_{32}\rangle.
% \]

\begin{figure}[!t]
  \vspace{-10pt} % Reduces space above figure
  \centering
  \includegraphics[width=\columnwidth]{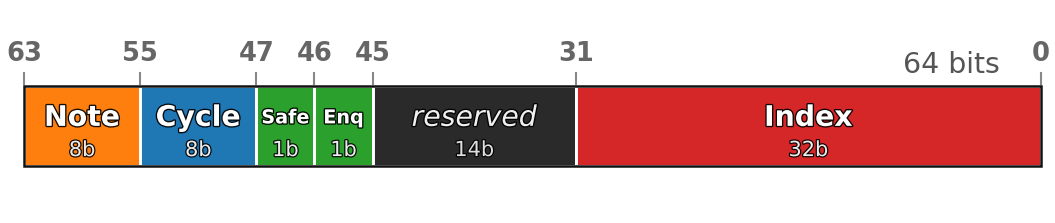}
  \vspace{-30pt} % Reduces space between image and caption
  \caption{Entry word for G-WFQ}
  \label{fig:gwf_entry}
  \vspace{-10pt} % Reduces space below caption
\end{figure}
In the packed entry format, \texttt{Index} is either a payload index, \texttt{$\bot$} (empty), or \texttt{$\bot_{\mathrm{c}}$} for a consumed slot. As in sCQ and wCQ, the ring uses the same threshold-based empty test and the same bounded-memory setting with fixed capacity~\cite{wcq}.

\begin{figure}[H]
  \vspace{-10pt} % Reduces space above figure
  \centering
  \includegraphics[width=\columnwidth]{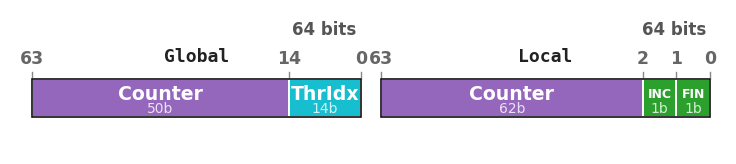}
  \vspace{-24pt} % Reduces space between image and caption
  \caption{Global and local Tail/Head word for G-WFQ}
  \label{fig:gwf_global_local}
  \vspace{-6pt} % Reduces space below caption
\end{figure}

The global \texttt{Head} and \texttt{Tail} are packed into one 64-bit word, as shown in Fig.~\ref{fig:gwf_global_local},
% \[
% \texttt{Global}=\langle \texttt{Counter},\ \texttt{ThrIdx}\rangle,
% \]
where \texttt{ThrIdx} is a helper thread identifier or a reserved null value. Each thread owns a fixed record containing the request flags, the initial and local head/tail values, the payload index for enqueue, two sequence fields, and a phase-2 record. Fig.~\ref{fig:gwf_global_local} also shows how local head/tail words are packed together.
% \[
% \texttt{Local}=\langle \texttt{Counter},\ \texttt{INC},\ \texttt{FIN}\rangle.
% \]
Compared with wCQ~\cite{wcq}, the role is the same but adapted using single-word GPU atomics.

\paragraph{\textbf{GPU publication discipline}}
A slow-path request is published in a fixed order. The owner first writes the payload fields (\texttt{localTail/localHead}, \texttt{initTail/initHead}, \texttt{index}, \texttt{enqueue}), then publishes the request with the sequence fields and the \texttt{pending} bit. Helpers accept a request only if the published sequence values match, similar to the request-record discipline used by wCQ~\cite{wcq}, but stated in the form needed for our packed 64-bit design. Algorithm~\ref{alg:gwfq-slowfaa} shows the cooperative slow-path increment (\textsc{SlowFAA}) used by both enqueue and dequeue to advance the global counter exactly once per round.

\begin{algorithm}[t]
\caption{G-WFQ cooperative slow-path increment}
\label{alg:gwfq-slowfaa}
\small
\SetAlgoLined
\DontPrintSemicolon
\SetKwFunction{CAS}{CAS}
\SetKwFunction{FAA}{FAA}
\SetKwProg{Fn}{function}{:}{}

\Fn{\textsc{SlowFAA}$(G,L,thld)$}{
  \While{\textnormal{true}}{
    \If{$L$ has \texttt{FIN}}{\Return false}
    read $G=\langle c,u\rangle$\;
    \If{$u \neq \texttt{NULL}$}{help the phase-2 request named by $u$}
    synchronize $L$ to $c$ using \texttt{INC}\;
    publish phase-2 record $(L,c)$\;
    \If{$\CAS(G,\langle c,\texttt{NULL}\rangle,\langle c+1,\texttt{tid}\rangle)$ succeeds}{
      \If{$thld \neq \varnothing$}{\FAA$(thld,-1)$}
      clear \texttt{INC} on $L$\;
      clear \texttt{ThrIdx} in $G$\;
      \Return true\;
    }
  }
}
\end{algorithm}

\paragraph{\textbf{Slow-path slot actions}}
\textsc{TryEnqSlow} and \textsc{TryDeqSlow} operate on the slot selected by the current logical ticket. A slow enqueue either installs the current-cycle entry or records (via \texttt{Note}) that a stale slot is no longer a valid candidate for this request. A slow dequeue follows the same pattern: it either completes on the matching current-cycle slot or updates the slot state so that later helpers make the same reuse decision. The slow-path note mechanism mirrors the role of \texttt{Note} in wCQ~\cite{wcq}, but here the entire decision state is carried in one packed slot word.

\paragraph{\textbf{Linearization points}}
% G-WFQ uses the same fast-path linearization points as G-LFQ. On the slow path, a successful enqueue linearizes at the CAS that first installs the current-cycle entry in the target slot. The later \texttt{Enq}:0$\rightarrow$1 transition does not move the linearization point; it only converts that slot into the ordinary produced form seen by later fast-path dequeues. A successful dequeue linearizes at \texttt{consume}, which atomically marks the slot consumed. An empty dequeue linearizes at the first threshold observation that proves that no reachable entry exists for the claimed head ticket.
G-WFQ uses the same fast-path as G-LFQ. On the slow path, a successful enqueue linearizes at the CAS that first installs the current-cycle entry in the target slot. The later \texttt{Enq}-bit update from 0 to 1 does not move the linearization point; it only makes the entry visible to later fast-path dequeues. A successful dequeue linearizes at \textsc{Consume}, which atomically marks the slot consumed. An empty dequeue linearizes at the first threshold observation, proving that no reachable entry exists for the claimed head ticket.

\begin{lemma}[\textbf{Single-word shared-state atomicity}]
\label{lem:gwfq-atomicity}
Every shared word that is concurrently modified in G-WFQ is updated with one 64-bit atomic operation: slot entries, global head/tail state, and local head/tail state. Consequently, helpers never observe torn mixed states, such as a new cycle with an old index or a new counter with a stale helper identifier. This is the key invariant that replaces 128-bit compare-and-swap or CAS2-style atomicity from wCQ~\cite{wcq} in our GPU design.
\end{lemma}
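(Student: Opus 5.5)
The plan is to prove Lemma~\ref{lem:gwfq-atomicity} by enumerating the concurrently modified shared words of G-WFQ and checking, for each one, that every write is a single 64-bit atomic instruction on a naturally aligned word. There are three classes of such words:
\begin{itemize}
\item the slot entries $\langle\texttt{Note},\texttt{Cycle},\texttt{Safe},\texttt{Enq},\texttt{Index}\rangle$ of Fig.~\ref{fig:gwf_entry};
\item the global words $\langle\texttt{Counter},\texttt{ThrIdx}\rangle$ for \texttt{Head} and \texttt{Tail} of Fig.~\ref{fig:gwf_global_local};
\item the per-thread local words $\langle\texttt{Counter},\texttt{INC},\texttt{FIN}\rangle$.
\end{itemize}

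\textbf{Field widths.} For each class I would first verify that the declared field widths sum to at most 64 bits. For the entry word this is $8+8+1+1+32=50$. For the global and local words, the counter field must be narrow enough to leave room for \texttt{ThrIdx}, or for \texttt{INC} and \texttt{FIN}. This shows the whole logical state of each object fits in one hardware word.

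\textbf{Write sites.} Next I would go through every write site in Algorithm~\ref{alg:glfq} (reused by G-WFQ), Algorithm~\ref{alg:gwfq-slowfaa}, and the slow-path slot actions \textsc{TryEnqSlow}/\textsc{TryDeqSlow}. Each write should be one of:
\begin{itemize}
\item a 64-bit \textsf{CAS} (installing an entry, updating \texttt{Note}, the \texttt{Enq} $0\to1$ transition, the \textsc{SlowFAA} step $\langle c,\texttt{NULL}\rangle\to\langle c+1,\texttt{tid}\rangle$, synchronizing $L$ with \texttt{INC});
\item a 64-bit \textsf{FAA} (the fast-path counter bump and the threshold decrement);
\item an atomic bitwise \textsf{OR}/\textsf{AND} used for \textsc{Consume}, clearing \texttt{INC}, setting \texttt{FIN}, and clearing \texttt{ThrIdx}.
\end{itemize}

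Two points need care at this step. The first is the fast-path \textsf{FAA} on a packed global word. I would argue that the increment touches only the low \texttt{Counter} bits and cannot carry into \texttt{ThrIdx}, given the counter bound from the configuration, in the same spirit as the cycle-range bound in Lemma~\ref{lem:glfq-cycle-tags}. Equivalently, I would place \texttt{Counter} in the high bits so that any overflow is discarded. The second is \textsc{Consume}: it must be a single atomic RMW on the whole word, not a read of \texttt{Index} followed by a separate store.

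\textbf{Reads.} I would then handle the read side. Each read of a shared word is a single aligned 64-bit load, which GPU memory models (HIP/CUDA on CDNA) guarantee to be single-copy atomic. Since every write is also single-copy atomic, each load returns exactly the value produced by some write in the per-location coherence order. Hence the value is a state that actually existed, and no torn combination can appear, such as a new \texttt{Cycle} paired with an old \texttt{Index}, or a new \texttt{Counter} with a stale \texttt{ThrIdx}.

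\textbf{Non-packed state.} The owner's payload fields (\texttt{initTail}, \texttt{index}, and so on) are written non-atomically. These are not covered by the lemma; they are protected instead by the publication discipline, in which helpers accept a request only if the sequence fields match. I would note this explicitly so the lemma's scope is exactly the concurrently modified words.

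\textbf{Relation to wCQ.} Finally, I would relate the result to wCQ~\cite{wcq}. Every place where wCQ used CAS2 to update a pair atomically, such as (counter, helper index) or (entry, note), corresponds here to a single word containing both components. So each CAS2 step maps to a single 64-bit CAS with the same success condition.

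\textbf{Main obstacle.} I expect the hardest step to be the \textsf{FAA} on packed global words: showing that no carry corrupts the adjacent field, and that concurrent fast-path \textsf{FAA}s interleave correctly with the slow-path \textsf{CAS} expecting $\texttt{ThrIdx}=\texttt{NULL}$. I would first try to show that the \textsf{CAS} simply fails and retries whenever an \textsf{FAA} has intervened, which is safe because the owner rereads $G$. If that turns out to be delicate, I would restrict fast-path increments to the counter bits via the layout argument above.
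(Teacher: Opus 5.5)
Your proposal is correct and is essentially the paper's own argument. The paper gives no separate proof; it states the reasoning in the lemma itself: each concurrently modified word is a single 64-bit word changed only by one atomic operation, so every read returns a value written as a whole. Your version usefully makes explicit two points the paper leaves implicit: the single-copy atomicity of aligned 64-bit loads, and the requirement that the fast-path \textsf{FAA} on the packed \texttt{Head}/\texttt{Tail} word never carries into \texttt{ThrIdx}.

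One small tightening. Your write-site categories should also admit aligned 64-bit atomic stores and the \texttt{Tail} catch-up \textsf{CAS}. In particular, the owner's initialization of \texttt{localTail}/\texttt{localHead} during publication falls inside the lemma's scope, because it is local head/tail state that helpers \textsf{CAS}; it does not belong among the non-atomically written payload fields such as \texttt{initTail} and \texttt{index}.
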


\begin{lemma}[\textbf{Modular cycle correctness}]
\label{lem:gwfq-modular-cycle}
Let $R=2^{b_c}$ be the cycle range. G-WFQ compares cycle tags modulo $R$ and treats $a$ as newer than $b$ when
\[
0 < (a-b)\bmod R < R/2.
\]
For a queue with logical capacity $n$, $k$ participating threads, and help delay $D$, the true cycle skew on any one physical slot is bounded by
\[
S_{\max} < \frac{Dk+5n}{2n}.
\]
Hence modular comparison is sound whenever
\[
R > \frac{Dk}{n}+6.
\]
Under the proof configuration used in this paper ($k \le n$ and $D =64$), an 8-bit cycle tag ($R=256$) is therefore sufficient.
\end{lemma}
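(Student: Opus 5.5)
The plan is to split the argument into a skew bound and a soundness criterion for modular comparison. The skew bound comes first. For a physical slot $j$, the cycle stored in $Entry[j]$ is always $\textsc{cycle}(t)$ for some ticket $t$ that was at some moment obtained from \texttt{Tail} or \texttt{Head}. Any comparison on that slot is made by a thread holding some ticket $t'$. So the true skew is at most $\lfloor |t'-t|/(2n)\rfloor+1$, and it suffices to bound the largest gap between any two tickets that are simultaneously live for the same slot.

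I would bound that gap by tracking how far the counters can run ahead of the oldest operation that still holds a ticket. There are two contributions.
\begin{itemize}
\item \emph{Ring occupancy.} By the sCQ/wCQ invariants (threshold $3n-1$, \texttt{Tail} catch-up in \textsc{TryDeq}), \texttt{Tail}$-$\texttt{Head} stays within $O(n)$. Reusing the wCQ accounting that the threshold allows at most $3n-1$ failed dequeues past the last enqueue, together with the $2n$ ring width, gives a term of at most $5n$ tickets. I will count this carefully and confirm the constant is $5n$ rather than something like $4n+1$.
\item \emph{Stalled operations.} An operation that is slow can only stay stale while others advance the counter. By the help delay $D$, every one of the $k$ threads inspects a peer record every $D$ operations. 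So after at most $Dk$ counter increments, some helper has adopted the stalled request. From then on, \textsc{SlowFAA} advances the global counter only through the request's synchronized local counter: the \texttt{INC}/\texttt{FIN} handshake with the phase-2 record. By Lemma~\ref{lem:gwfq-atomicity}, the state the helper reads is consistent, so helpers and owner act on the same logical ticket and the counter cannot drift further from that ticket.
\end{itemize}
Summing the two terms gives a gap strictly below $Dk+5n$ tickets, hence $S_{\max}<(Dk+5n)/(2n)$.

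The second step is the modular criterion. If the true difference $\delta$ between two cycles satisfies $|\delta|<R/2$, then $(a-b)\bmod R$ lies in $(0,R/2)$ exactly when $\delta>0$, so the modular test agrees with the true order. Since comparisons may go in either direction and include the off-by-one from the partial wrap, I would require $2(S_{\max}+1)\le R$. Substituting the skew bound, $R>\frac{Dk+5n}{n}+2$ implies this, and that equals $R>\frac{Dk}{n}+7$. That is one more than the stated $+6$, so I expect to tighten by $1$ using strictness of the skew bound together with integrality of $S_{\max}$. Finally, plugging in $k\le n$ and $D=64$ gives the requirement $R>70$, which $R=256$ satisfies.

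The main obstacle is the $Dk$ term: showing rigorously that a stalled thread's ticket lags the global counter by fewer than $Dk$ increments. This requires that help is triggered deterministically and not only eventually. The first approach I would try is a potential argument: count the counter increments performed between the request's publication and the first helper's synchronization, and charge each increment to a thread's help-delay counter. Each thread can absorb at most $D$ such charges before it must inspect the stalled peer, and the round-robin inspection order guarantees that peer is reached within one sweep.
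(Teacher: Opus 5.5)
Your proposal has a genuine gap at the step you flag as the main obstacle, and the route you sketch for closing it cannot work. Help delay governs only \emph{published} requests. Consider a thread (or its whole wave) suspended between its \textsc{WaveFAA} and its read or CAS of $Entry[j]$ on the fast path. It has published nothing, so no helper ever adopts it. Meanwhile the other threads can run balanced enqueue/dequeue traffic that respects every occupancy and threshold invariant while advancing \texttt{Head}, \texttt{Tail}, and the cycle of slot $j$ without limit. Once the true skew exceeds $R/2$, the resumed thread's test $E.\texttt{Cycle}<c$ can succeed spuriously, and it can install an entry with a stale cycle.

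So in the asynchronous model no bound in terms of $D$, $k$, $n$ follows from the algorithm alone. You need an explicit timing hypothesis, for example a bounded-delay or bounded step-ratio strengthening of the residency and fairness assumption of Theorem~\ref{thm:gwfq-waitfree}. It must cap how many tickets are issued while any participant holds an unused ticket or a cached local counter.

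Your charging argument also fails on its own terms. After $D$ operations a thread inspects the \emph{next} peer in its round-robin order, not necessarily the stalled one, so it reaches a given peer only once per sweep of $Dk$ of its own operations. Hence $k$ threads can issue on the order of $Dk^2$ operations, each consuming up to a patience's worth of tickets, before any of them adopts the request. That is far too weak for $R=256$ when $k$ is close to $n$. Nor does adoption stop the drift: threads that have not yet reached this peer keep doing fast-path fetch-and-adds on the same counter. \textsc{SlowFAA} copes by resynchronizing $L$ to the current $c$ every round, which again reduces the question to how stale a delayed participant's copy can be. Finally, the $5n$ term is only guessed as $(3n-1)+2n$ and explicitly deferred, so neither summand of $Dk+5n$ is actually established.

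The paper itself gives no derivation of the skew bound either. It asserts $S_{\max}<(Dk+5n)/(2n)$, and only the arithmetic ``hence'' step is implicit, so your modular step is the only part with a counterpart there. That step is right, but the $+1$ is spurious. Every cycle compared on slot $j$ (including \texttt{Note}) comes from a ticket congruent to $j$ modulo $2n$, so the skew is exactly $|t'-t|/(2n)$. With the $+1$, a ticket gap below $Dk+5n$ would not even give your stated $S_{\max}<(Dk+5n)/(2n)$. Without it, $S_{\max}<(Dk/n+5)/2\le R/2$ whenever $R\ge Dk/n+5$. So the stated condition $R>Dk/n+6$ has slack, no integrality tightening is needed, and $R=256>70$ covers $k\le n$, $D=64$.
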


\begin{lemma}[\textbf{One cooperative increment per round}]
\label{lem:gwfq-one-round}
For any published slow-path request, all helpers race on the same global transition
\[
\langle c,\texttt{NULL}\rangle \rightarrow \langle c+1,\texttt{tid}\rangle.
\]
At most one CAS can succeed. The \texttt{INC} bit prevents duplicate local increments for the same round, and the \texttt{FIN} bit terminates further rounds once the request has been resolved. Therefore, each logical slow-path round advances \texttt{Head} or \texttt{Tail} exactly once. For dequeue rounds, the \texttt{Threshold} is decremented at most once.
\end{lemma}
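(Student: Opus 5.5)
The argument rests on the atomicity of the single compare-and-swap on the packed global word $G$, together with the per-request discipline enforced by the \texttt{INC} and \texttt{FIN} bits in the local word $L$. By Lemma~\ref{lem:gwfq-atomicity}, $G$ and $L$ are each updated only by single 64-bit atomics, so every helper observes a consistent pair $\langle c,u\rangle$ and a consistent triple $\langle\texttt{Counter},\texttt{INC},\texttt{FIN}\rangle$. The proof has three parts:
\begin{enumerate}
\item the global transition for round $c$ succeeds at most once;
\item the local counter of the request advances at most once per round, and rounds stop after \texttt{FIN};
\item the threshold side effect is tied to the unique winning CAS.
\end{enumerate}

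\textbf{Step 1 (uniqueness of the global step).} I fix a round, meaning the value $c$ read from $G$ while $u=\texttt{NULL}$. Every helper of the request that reaches the CAS in \textsc{SlowFAA} attempts exactly $\langle c,\texttt{NULL}\rangle\rightarrow\langle c+1,\texttt{tid}\rangle$. After the first success, $G$ has counter $c+1$. The counter field is monotone: it changes only by fast-path FAA or by this CAS, both of which increase it. Hence $G$ never again equals $\langle c,\texttt{NULL}\rangle$, and all other CASes expecting it fail. A fast-path FAA that interleaves also changes the counter and makes the pending CASes fail. In that case no increment is attributed to round $c$, and helpers re-read $G$, which starts a fresh round with a new $c$. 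Linearizing each round at its successful CAS therefore gives at most one global advance per round.

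\textbf{Step 2 (local bookkeeping and termination).} Before the CAS, each helper runs ``synchronize $L$ to $c$ using \texttt{INC}''. I would argue that $L$ is modified only by CAS on the packed word conditioned on the current counter and the \texttt{INC} value. Under that condition, the first helper to set \texttt{INC} for target $c$ is the only one whose local update takes effect; the others see \texttt{INC} already set with counter $c$ and leave $L$ unchanged. \texttt{INC} is cleared only by the winner of Step 1, after its global CAS. So between two consecutive winning CASes the local counter moves at most once. Conversely, every winning CAS for the request is preceded by a publication of $(L,c)$, so the local increment and the global increment are paired one-to-one. Any helper may complete a stalled local update by reading the published phase-2 record, either via the $u\neq\texttt{NULL}$ branch or on its next iteration. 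This gives ``exactly once'' rather than only ``at most once''. For termination, once \texttt{FIN} is set on $L$ it is never cleared for this request, because the sequence-number check rejects stale records. Every helper therefore exits at the loop guard on its next iteration, and no further rounds occur.

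\textbf{Step 3 (threshold).} The $\textsf{FAA}(thld,-1)$ is executed only inside the success branch of the global CAS. By Step 1 there is one such branch per round, so the threshold is decremented at most once per dequeue round.

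The main obstacle is Step 2. I must show that a helper delayed across rounds, holding a stale $c$, cannot corrupt $L$ or succeed on a later round's CAS. My approach would be to show that every modification of $L$ is a CAS comparing the full packed word, including the counter. A stale helper's expected value then mismatches and its operation fails. Its global CAS also fails, since the counter in $G$ has moved past its stale $c$. Combined with the sequence-field acceptance check from the publication discipline, this rules out ABA on $L$ within the lifetime of one request.
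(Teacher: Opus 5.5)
Your outline matches the paper's argument, which is only the sketch inside the lemma statement, and Steps~1 and~3 are fine. The step that fails is in Step~2.

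You let \texttt{INC} freeze $L$: other helpers ``leave $L$ unchanged'' until a winner clears it. From this you conclude that between two consecutive winning CASes the local counter moves at most once, and hence that local and global increments pair one-to-one. That contradicts your own Step~1. A fast-path \textsf{FAA}, or the winning CAS of a \emph{different} pending request (all requests race on the same $G$), can push the counter past $c$ while $L$ holds $c$ with \texttt{INC} set. If $L$ stays frozen, one of two things happens. Either the helpers keep expecting $\langle c,\texttt{NULL}\rangle$, which by your own monotonicity argument can never succeed. Or they win at some $c'\neq c$ while $L$ still names $c$, so the request then works on a ticket already handed to another operation. Hence $L$ must be re-targeted while \texttt{INC} stays set. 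The ``at most once'' claim is false, and the pairing needs a different justification.

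What you need is an invariant on the packed word $L$, not a count of its moves:
\begin{itemize}
\item Every change to $L$ is a full-word CAS against the helper's previously observed value, as in your last paragraph. So the counter in $L$ is monotone and stale CASes fail.
\item A helper attempts the CAS on $\langle c,\texttt{NULL}\rangle$ only after observing that $L$ holds $c$ with \texttt{INC} set.
\item \texttt{FIN} can be installed only on a committed value with \texttt{INC} clear, and \texttt{INC} can never be installed over \texttt{FIN}.
\end{itemize}
Since the counter in $G$ is monotone, no helper can target a value above $c$ before the winning CAS on $\langle c,\texttt{NULL}\rangle$. After that CAS, every helper either finds $u\neq\texttt{NULL}$ and completes phase~2 first, or reads $G$ after phase~2 is done. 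So the win happens while $L$ holds $c$ with \texttt{INC} set, and phase~2 commits exactly $c$.

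A losing helper's next local CAS then fails. On re-reading $L$ it must exit instead of opening a new round: with ticket $c$ if \texttt{INC} is clear, with false if \texttt{FIN} is set. The \textsc{SlowFAA} pseudocode returns only on a successful global CAS or at the loop-head \texttt{FIN} test. So this exit rule has to be part of ``synchronize $L$''; without it, losing helpers produce exactly the duplicate increments the lemma rules out.

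The same invariant is what your \texttt{FIN} argument is missing. A helper already past the loop-head test holds a \emph{current} $c$, so your stale-helper reasoning does not stop it from winning after \texttt{FIN} is set. For a dequeue, that would spend an extra \texttt{Head} ticket and an extra \texttt{Threshold} decrement. With the invariant, \texttt{FIN} can only be set on the last round's committed value, and \texttt{INC} can never be set afterwards. So every target a helper still holds belongs to a round already won or superseded, and no further global increment for the request can succeed.
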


\begin{lemma}[\textbf{Stale-slot exclusion}]
\label{lem:gwfq-stale-slot}
If a helper determines that an old-cycle slot is not reusable for the current request, it advances \texttt{Note} to the current cycle. Any later helper for that same request cycle observes the updated note and skips the same slot. Thus, helpers do not diverge by repeatedly reconsidering stale slots that have already been ruled out.
\end{lemma}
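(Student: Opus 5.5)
We prove Lemma~\ref{lem:gwfq-stale-slot} by reducing it to the fact that \texttt{Note} on each slot only moves forward in (modular) cycle order. Everything then follows from the atomicity of single-word updates (Lemma~\ref{lem:gwfq-atomicity}) and the soundness of modular cycle comparison (Lemma~\ref{lem:gwfq-modular-cycle}).

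The first step is to fix the setting. By Lemma~\ref{lem:gwfq-one-round}, all helpers of a given request in a given round work on the same logical ticket $t$. So they target the same physical slot $j=\textsc{slot}(t)$ with the same request cycle $c=\textsc{cycle}(t)$. Every helper reads $Entry[j]$ as one packed 64-bit word. By Lemma~\ref{lem:gwfq-atomicity}, the fields it sees (\texttt{Note}, \texttt{Cycle}, \texttt{Safe}, \texttt{Enq}, \texttt{Index}) therefore all come from a single atomic snapshot.

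The second step is to establish a monotonicity invariant. The only code that modifies \texttt{Note} is the stale-slot branch of \textsc{TryEnqSlow}/\textsc{TryDeqSlow}. That branch performs $\CAS(Entry[j],E,E[\texttt{Note}\gets c])$, and only when $E.\texttt{Note}$ is strictly older than $c$ under the comparison of Lemma~\ref{lem:gwfq-modular-cycle}. All other CASes on $Entry[j]$ copy the \texttt{Note} field unchanged from the snapshot they replace. By induction over the sequence of successful CASes on $Entry[j]$, \texttt{Note} is therefore nondecreasing in true cycle order. Lemma~\ref{lem:gwfq-modular-cycle} bounds the live skew on a slot below $R/2$, so the modular ``newer than'' relation coincides with the true order on every reachable pair. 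This means wraparound cannot make a later note appear older.

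The third step is the main claim. Suppose helper $H_1$ rules the slot out for cycle $c$. Either its CAS succeeds, or it fails because $Entry[j]$ changed. In the failure case, $H_1$ re-reads and either finds \texttt{Note} already at least $c$ or retries. Each retry is triggered by some other successful update, and Lemma~\ref{lem:gwfq-one-round} bounds the number of such updates per round. In every case, once any helper's ruling-out completes, $Entry[j].\texttt{Note}\succeq c$ holds from then on by monotonicity. A later helper $H_2$ for the same request and cycle first tests $\texttt{Note}\succeq c$, so it takes the skip branch and advances to the next ticket exactly as $H_1$ did. Hence helpers agree on the skip decision and do not reconsider the slot.

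The main obstacle is the race in which $H_2$ read $Entry[j]$ before $H_1$'s note update and has not yet acted. I would handle it by showing that $H_2$'s eventual action is a CAS against its stale snapshot. That CAS must fail once \texttt{Note} has changed, which forces $H_2$ to re-read and observe the note. The delicate case is an installing CAS by $H_2$ that could succeed on a word whose \texttt{Note} has not changed. Here I would argue, as in wCQ, that $H_1$'s stale determination depends only on fields of that same snapshot word. Therefore $H_2$, seeing the same word, reaches the same stale determination, and its decisions are consistent with $H_1$'s.
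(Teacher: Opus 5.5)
Your first three steps agree with the paper, which offers no argument beyond the informal one in the statement itself: the helper advances \texttt{Note}, and later helpers see it and skip. Your atomic-snapshot and \texttt{Note}-monotonicity steps are a reasonable sharpening of that. The gap is in the ``delicate case'' of your last paragraph. There you argue that $H_1$'s stale determination depends only on fields of its snapshot word, so $H_2$, reading the same word, must reach the same determination. That is false. The reusability test, as in Algorithm~\ref{alg:glfq} and in wCQ's slow path, is $E.\texttt{Cycle}<c$ and $(E.\texttt{Safe}\lor Head\le t)$ and $E.\texttt{Index}\in\{\bot,\bot_c\}$, and it reads the separate \texttt{Head} word. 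With $E.\texttt{Safe}=0$, a helper that read \texttt{Head} before it passed $t$ correctly decides to install on exactly the word that a helper reading \texttt{Head} later decides to rule out. The local decisions can genuinely differ, so no argument that they agree can be made to work.

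What you need instead is that the two kinds of decision are serialized by the single-word CAS. Both are conditional on the same full 64-bit snapshot, so the first successful one commits, and any CAS prepared from an earlier snapshot fails. If a note-CAS commits first, your monotonicity argument (together with the failing stale CAS you already describe) finishes the job. If an install-CAS commits first, then $H_1$ never actually rules the slot out. Its note-CAS fails, and on re-reading it finds $\texttt{Cycle}=c$, which violates the old-cycle guard $\texttt{Cycle}<c$ of the stale branch, so no note for $(j,c)$ can ever be written. Cycle monotonicity within the skew bound of Lemma~\ref{lem:gwfq-modular-cycle} rules out an ABA return to the old word. Replace your agreement claim with this symmetric argument.

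Separately, the retry bound you attribute to Lemma~\ref{lem:gwfq-one-round} does not follow from it. That lemma counts updates of the global counter and \texttt{Threshold}, not CASes on $Entry[j]$ by fast-path threads whose tickets map to slot $j$. The consistency claim does not need that bound anyway.
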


\begin{theorem}[\textbf{G-WFQ is linearizable and FIFO}]
\label{thm:gwfq-linearizable}
\end{theorem}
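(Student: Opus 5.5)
The proof proceeds by reduction, in the same style as the G-LFQ theorem. We show that every reachable execution of G-WFQ corresponds to an execution of wCQ~\cite{wcq} with the same sequence of logical slot transitions and the same linearization points. The linearizability and FIFO argument of wCQ, which itself rests on the sCQ ring discipline~\cite{scq}, then transfers. The fast path is handled exactly as in G-LFQ. By Lemma~\ref{lem:wavefaa-order}, wave-batched reservation yields the same total ticket order as per-thread FAA. Cycle comparisons agree with the unbounded counter view by Lemma~\ref{lem:gwfq-modular-cycle}, which replaces Lemma~\ref{lem:glfq-cycle-tags} with the explicit bound $R > Dk/n + 6$ satisfied by $R=256$. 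So fast-path enqueues, consumes, and empty detections follow the G-LFQ argument verbatim.

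For the slow path, I would set up a simulation relation. It maps each packed 64-bit word (slot entry, global \texttt{Head}/\texttt{Tail} with \texttt{ThrIdx}, local counter with \texttt{INC}/\texttt{FIN}) to the corresponding wCQ double-word state. By Lemma~\ref{lem:gwfq-atomicity}, every concurrent modification of such a word is a single atomic step. Hence each G-WFQ CAS corresponds to one wCQ CAS2 step, and no torn intermediate state is ever observable. Next I argue that helpers act on a consistent, well-defined request. The publication discipline writes payload fields first and then the sequence fields with the \texttt{pending} bit. Helpers accept only when the sequence values match. So, assuming release/acquire ordering on the publishing write, every helper that acts reads the intended \texttt{initTail}/\texttt{initHead}, \texttt{index}, and \texttt{enqueue}.

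Lemma~\ref{lem:gwfq-one-round} then shows that every slow-path round consumes exactly one ticket from the global counter, and that \texttt{Threshold} is decremented at most once per dequeue round. This makes the slow-path ticket sequence interleave with fast-path tickets into a single monotone order, just as in wCQ. Lemma~\ref{lem:gwfq-stale-slot} shows that all helpers of one request make the same decision about each stale slot. Consequently, a request's effect is uniquely determined by its round sequence, regardless of which helper executes which step.

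With these facts in place, I would verify the linearization points one by one. A slow enqueue linearizes at the unique CAS that installs the current-cycle entry. It is unique because later helpers see the installed cycle (or \texttt{FIN}) and stop, and the \texttt{Enq} $0\to1$ update changes only visibility, not queue contents. A dequeue linearizes at its \textsc{Consume}. An empty dequeue linearizes at the threshold observation, which the sCQ/wCQ threshold invariant shows is a valid emptiness witness at that instant.

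FIFO then follows by the standard ring argument. Elements are dequeued in ticket order because a dequeuer with ticket $h$ can only consume the entry installed at ticket $h$ (same slot and cycle), and skipped tickets never hold a live element. The step I expect to be the main obstacle is showing that no late helper produces a second effect for a completed request, for example re-installing an entry or consuming a second slot after a different helper has already linearized the operation. To handle it, I would combine \texttt{FIN}, the matching sequence numbers, and the modular cycle bound to prove that any stale helper CAS fails. This is exactly where the packed single-word encoding must stand in for CAS2.
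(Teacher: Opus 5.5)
Your overall route is the paper's. Its proof is exactly the chain Lemma~\ref{lem:gwfq-atomicity}, Lemma~\ref{lem:gwfq-modular-cycle}, Lemma~\ref{lem:gwfq-one-round}, Lemma~\ref{lem:gwfq-stale-slot}, concluding that the slow path acts as one logical fetch-and-add round and so composes with the G-LFQ fast path. Your wCQ simulation, publication-ordering assumption and late-helper argument are a more explicit version of the same idea.

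The step that fails is the one you add on top: you fix linearization points at the installing CAS and at \textsc{Consume}, and then derive FIFO from ticket order. In an FAA-based ring those instants need not follow ticket order. Suppose two enqueuers draw tickets $t_1<t_2$ (say in different wavefronts). The holder of $t_2$ installs $b$ first, then the holder of $t_1$ installs $a$, and a dequeuer that afterwards draws $t_1$ returns $a$. Ordering by your points gives $\mathit{enq}(b),\mathit{enq}(a),\mathit{deq}\to a$, which is not a legal FIFO history. Two dequeuers that consume out of ticket order break the \textsc{Consume} point in the same way. Likewise, an enqueue that has installed its entry but not yet reset \texttt{Threshold} lets a concurrent dequeuer return \texttt{EMPTY}, which contradicts linearizing that enqueue at its CAS. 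Each of these histories is linearizable, but only by reordering overlapping operations away from the instants you chose.

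Your sentence ``elements are dequeued in ticket order'' is the right invariant, but it does not connect to the points you propose to verify. You need to define a linearization \emph{order} from tickets, as correctness proofs for FAA-based queues typically do:
\begin{enumerate}
\item Order successful enqueues by their final \texttt{Tail} ticket and successful dequeues by their \texttt{Head} ticket. A slow request has a single final ticket by Lemma~\ref{lem:gwfq-one-round}, and each of these orders respects real time because the counters are atomic and monotone.
\item Show the two sequences can be interleaved so that the dequeue of ticket $t$ follows the enqueue of ticket $t$ while respecting real-time precedence between enqueues and dequeues. This is the nontrivial obligation.
\item Insert each \texttt{EMPTY} dequeue at a position justified by the threshold invariant, explicitly covering the window before an enqueuer's threshold reset.
\end{enumerate}

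The paper's ``Linearization points'' paragraph asserts the same fixed points. Its proof of this theorem never relies on them, though, and argues only at the level of the four lemmas. Your more explicit plan therefore exposes a problem the paper's sketch leaves unstated, and the fix has to be made in your argument.
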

\begin{proof}
By Lemma~\ref{lem:gwfq-atomicity}, all shared-state transitions are atomic at the granularity assumed by the algorithm. By Lemma~\ref{lem:gwfq-modular-cycle}, reduced-width cycle tags preserve the intended live-slot order. By Lemma~\ref{lem:gwfq-one-round}, the slow path behaves like one logical fetch-and-add round even when many helpers participate. By Lemma~\ref{lem:gwfq-stale-slot}, helpers cannot later reuse a stale slot that has already been ruled out for the same request. Therefore, the slow path preserves the same FIFO order as the fast path, and the two compose correctly. Hence, G-WFQ is linearizable and preserves FIFO order.
\end{proof}

\begin{theorem}[\textbf{GPU-resident wait-freedom of G-WFQ}]
\label{thm:gwfq-waitfree}
Assume that the participating thread set remains resident (i.e., all participating blocks remain concurrently resident on the GPU) and that the scheduler provides fair progress among those threads. Under this assumption, G-WFQ is wait-free.
\end{theorem}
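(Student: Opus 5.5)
The proof follows the standard fast-path/slow-path argument of wCQ~\cite{wcq}, adapted to the packed 64-bit state. Fix an operation $op$ by a resident thread $p$; we bound the number of steps $p$ takes before $op$ returns.

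\textbf{Fast path.} By construction, $p$ performs at most a patience-constant number of \textsc{TryEnq}/\textsc{TryDeq} attempts. Each attempt is a constant number of shared-memory steps. This holds because \textsc{WaveFAA} issues one atomic per wave, and by Lemma~\ref{lem:wavefaa-order} every lane of the wave receives a ticket in that same call. So the fast path contributes a bounded number of steps independent of other threads. If all attempts fail, $p$ publishes its request in its fixed record using the publication discipline (payload fields first, then the sequence fields and \texttt{pending}), which is also a bounded number of steps.

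\textbf{Bounded delay until helping begins.} Every active thread inspects one peer record every $D$ operations and scans the $k$ records round-robin. So after at most $Dk$ further operations by any other thread, that thread will help $p$'s request. Until then, $p$ itself keeps executing \textsc{SlowFAA} and the slow slot actions. This is where the residency and fair-scheduling assumption enters. On a GPU, a thread in a non-resident block, or one starved by the scheduler, could otherwise stall the system indefinitely, for example while holding \texttt{ThrIdx} in $G$. Under the assumption, every thread that installed a non-\texttt{NULL} \texttt{ThrIdx} keeps taking steps. Any other thread that reads $u\neq\texttt{NULL}$ helps the named phase-2 request to completion and clears it. Hence no thread waits on another thread's future step.

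\textbf{Bounding the slow path.} We bound the number of logical rounds before $p$'s request receives \texttt{FIN}. By Lemma~\ref{lem:gwfq-one-round}, each round advances the global counter exactly once. Every failed CAS in \textsc{SlowFAA} therefore corresponds to a completed round, either for $p$'s request or for a competing request. Competing requests are at most $k-1$ at a time, and each is also helped. Once all $k$ threads are helping $p$ (after the $Dk$ delay), every successful round works on behalf of $p$. By Lemma~\ref{lem:gwfq-stale-slot}, helpers on $p$'s request never revisit a ruled-out stale slot. By Lemma~\ref{lem:gwfq-modular-cycle}, cycle comparisons are sound within this window, and the window is at most $(Dk+5n)/(2n)$ wraps.

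From these facts, a slow enqueue finds a reusable slot within $O(n)$ tickets, since the ring has $2n$ slots and only $n$ are live. A slow dequeue either succeeds or drives \texttt{Threshold} below zero within $3n$ decrements and returns \texttt{EMPTY}, because each round decrements it at most once. Combining the phases gives a worst-case step bound of order patience $+\,Dk + O(n)\cdot k$ per operation, which establishes wait-freedom.

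\textbf{Expected difficulty.} The hardest step is showing that rounds spent on competing requests cannot starve $p$ indefinitely. I would first try a potential argument: each competing request finishes within a bounded number of rounds and, once \texttt{FIN} is set, is not republished until its owner's next operation, which is at least $D$ operations away. This caps interference at $k$ requests per helping epoch.
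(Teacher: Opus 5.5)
Your outline follows the paper's proof: patience bounds the fast path, publication is a fixed-size write, Lemma~\ref{lem:gwfq-one-round} gives one counter increment per slow round, \texttt{FIN} stops the helpers, and residency covers the rest. The paper's proof stops at that level and never bounds how many rounds pass before \texttt{FIN}. You rightly identify that bound as the real content, but your argument for it does not go through.

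\textbf{Where it fails.} First, a failed CAS in \textsc{SlowFAA} need not correspond to a completed round. G-WFQ uses the G-LFQ fast path, whose \textsc{WaveFAA} adds to the same packed $\langle\texttt{Counter},\texttt{ThrIdx}\rangle$ word. So $p$'s CAS on $\langle c,\texttt{NULL}\rangle$ also fails when another wave takes fast-path tickets in between, and no round has completed. Second, ``once all $k$ threads are helping $p$'' is circular. Helping is triggered by a thread's operation count, so a thread inside its own slow path does not reach $p$'s record until its own request finishes, which is exactly what is being proved. Third, the premise of your potential argument is false. $D$ is the peer-inspection period, not a limit on publication, and nothing stops a thread from entering the slow path again on its very next operation.

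Your $3n$ and $O(n)$ counts also hold only once interference has stopped, because each successful enqueue resets \texttt{Threshold} to $3n-1$ and other threads install and consume entries meanwhile. The derivation of your final bound undercounts interference: $Dk$ counts operations of a single other thread, there are $k-1$ such threads, and each of their operations can make $p$'s CAS fail more than once. Finally, constant work per fast-path attempt needs the \texttt{Tail} catch-up in \textsc{TryDeq} to be a single atomic or a bounded loop, since sCQ's catch-up loop is only lock-free.

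\textbf{How to repair it.} What $D$ actually gives you is this: after $p$ publishes, each other thread $q$ completes at most about $Dk$ of its own operations before its round-robin scan reaches $p$'s record. From then on $q$ works on $p$'s request until \texttt{FIN}. So $q$ interferes through at most $Dk$ operations, each contributing its patience-many fast-path increments plus the rounds of at most one slow request of its own.

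To bound those rounds without circularity, argue by well-foundedness. Suppose some set of requests never finished. Eventually every thread still taking steps would be the owner or a helper of one of them. From then on there are no fast-path increments, no \texttt{Threshold} resets, and no installs or consumes, since any of these would finish a request in the set. Then within $3n$ dequeue rounds \texttt{Threshold} goes negative and every dequeue in the set returns \texttt{EMPTY}. Any remaining enqueue eventually finds an installable slot, because at most $n$ of the $2n$ slots are live and \texttt{Head} no longer moves. This is a contradiction. It yields wait-freedom in the usual finite-steps sense; an explicit step bound needs a quantitative version of the same charging.

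Your own remark that any reader of $u\neq\texttt{NULL}$ completes and clears the phase-2 record already removes the dependence on the \texttt{ThrIdx} holder. So that is not where residency is needed, and the charging argument itself never appeals to fair scheduling.
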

\begin{proof}
The fast path is bounded by the compile-time patience constants. After that, the operation publishes a fixed-size request record and enters the cooperative slow path. By Lemma~\ref{lem:gwfq-one-round}, each slow round performs at most one logical increment of the corresponding global counter, and once the request is resolved, \texttt{FIN} is set so that all remaining helpers stop. Since the number of fast attempts is bounded and the work per slow round is bounded under the stated residency assumption, every operation completes in a bounded number of its own steps. Therefore, G-WFQ is wait-free under GPU-resident execution.
\end{proof}

%\vspace{-5pt}
\section{Correctness checks}

\label{sec:correctness}

\paragraph{\textbf{Linearizability via Porcupine}}
We log concurrent operations with fields \texttt{proc, op, arg, ret, call, end}, where \texttt{op}=0 denotes \textsc{Enq} and \texttt{op}=1 denotes \textsc{Deq}. We then feed the log to the standard FIFO model in Porcupine and check linearizability~\cite{linearizability,porcupine-linear}.\footnote{We follow Porcupine’s queue example; the model enqueues append to the state list and dequeues must return the head or report empty.}

% We illustrate a tiny history and its placement in time in \textbf{Fig. \ref{fig:porcupine-history}} to show how overlapping calls are resolved by the checker’s linearization.

% \begin{figure*}[htb]
% \centering
%   \captionsetup{justification=centering,margin=1pt}
%   \includegraphics[
%     width=1\linewidth,         
%     trim=5 10 5 5, 
%     clip                         
%   ]{figures/hist-cawq-mi210-6t-3ops.png}
%   \caption{Linearizable Operation History of G-WFQ on 6 Threads and 3 operations per second on Balanced 1:1 test. (Via Porcupine) }
%   \label{fig:porcupine-history}
% \end{figure*}

\paragraph{\textbf{Device-side FIFO conformance}}
Independently, we run a GPU check: each producer thread emits tokens \texttt{tok = (tid << 32) | (seq+1)}; consumers dequeue until all tokens have been consumed. We verify (i) exactly once (no zeros, no $>1$ counts), (ii) no out-of-bounds tokens, and (iii) monotone sequence per-producer. For SFQ’s bounded ring, we cap ops/thread to avoid intentional blocking of producers.

% \paragraph{\textbf{Result}}
% SFQ, G-WFQ-YMC (wait-free), G-WFQ (GPU wait-free queue) and G-LFQ (GPU lock-free queue), passed checks in Porcupine run which returned \textbf{linearizable}, and the device-side checker reported no duplicates or order violations. These two tests together establish linearizable FIFO semantics.
\paragraph{\textbf{Result}} SFQ, G-WFQ-YMC, G-WFQ, and G-LFQ all passed the Porcupine checks, which returned \emph{\textbf{linearizable}}, and the device-side checker reported no duplicates or per-producer order violations. Together, these tests support linearizable FIFO semantics.

 %\vspace{-2pt}
\section{Experimental Methodology}
\vspace{-2pt}
\label{sec:method}
We isolate queue behavior with two GPU micro-benchmarks before evaluating application workloads. All implementations are compiled with HIP; each variant (SFQ, G-WFQ-YMC, G-LFQ, G-WFQ) plugs into a unified harness that initializes the queues, launches a single kernel, and measures end-to-end kernel time. Kernels are run at a fixed block size.

\paragraph{\textbf{Devices}} We use the AMD MI300A and MI210 as shown in the hardware specification Table~\ref{tab:gpu-arch}. To minimize false sharing on lines, we \emph{pad cells to a cache-aligned width}.

% --- Small hardware table: GPU, arch, cache line size ---
\begin{table}[t]
  \centering
  \small
  \caption{GPUs used in micro-benchmark.}
  \label{tab:gpu-arch}
  \begin{tabular}{@{}lccccc@{}}
    \toprule
    GPU   & Arch   & Cache line & Compute Units & Global Mem \\
    \midrule
    MI210 & CDNA2  & 64 B & 104 & 64 GB \\
    MI300A& CDNA3  & 128 B & 224 & 128 GB  \\
    \bottomrule
  \end{tabular}
\end{table}

\subsection{\textbf{Throughput Benchmarks}}
\paragraph{Micro-benchmarks}
We use two fixed-duration queue micro-benchmarks. The first is a \textbf{balanced kernel}, where each active thread repeatedly performs one enqueue followed by one dequeue. The second benchmark is a \textbf{split producer/consumer kernel}, in which the threads are assigned producer or consumer roles, and the producer fraction is varied to create a load that is asymmetric. We evaluate producer/consumer splits of 25\% producer/75\% consumers, 50\%/50\%, 75\%/25\% which more closely resemble the empty, nominal, and full conditions of the queue. The test measures throughput over a fixed runtime interval intended to measure sustained throughput rather than completion time for a fixed number of operations.

\paragraph{Throughput metric}
We report \emph{\textbf{successful-operation throughput}}, measured as the number of successful enqueues plus successful dequeues divided by the measurement interval:
\begin{align}
\text{Successful ops} &= \text{successful enq} + \text{successful deq}, \\
\text{Throughput} &= \frac{\text{Successful ops}}{t_{\mathrm{meas}}} \; [\mathrm{ops/s}].
\end{align}
% We report Failed enqueues and empty dequeues when relevant, but for the throughput benchmarks it is not counted as successfull operations.

% \paragraph{Parameters and reporting.}
% We employ the following Threads \(T\in\{1024,4096,8192,16384,32768, 65536\}\), \(K{=}200\) operations/thread, and producer ratio is controlled by  compile time flag \texttt{P\_RATIO} (\(7{:}3\) by default). We also define queue capacity for each: \texttt{SFQ\_QUEUE\_LENGTH} for SFQ and the segment-pool size for G-WFQ-YMC/G-LFQ.

\paragraph{Parameters and reporting}
We sweep thread counts $T \in \{ 2^{9} - 2^{15} \}$ with a fixed block size of 256 threads, as that gave us optimal results (we swept block sizes from 64 to 512). Each run includes a warmup interval followed by a measurement interval. Each configuration is run five times; we report the median throughput. For the split kernel, we report results for producer fractions of 25\%, 50\%, and 75\%.

% The harness maps queues types to the following at compile time in Table~\ref{tab:buildflags}

% \begin{table}[t]
% \centering
% \small
% \caption{Build flags for selecting the queue and harness mapping.}
% \label{tab:buildflags}
% \begin{tabular}{@{}l l@{}}
% \toprule
% \texttt{-DUSE\_SFQ} & test SCOGLAND-FENG QUEUE \\
% \texttt{-DUSE\_G-WFQ-YMC} & test YMC GPU port WAIT-FREE QUEUE  \\
% \texttt{-DUSE\_G-LFQ} & test GPU Lock-Free Queue \\
% \texttt{-DUSE\_G-WFQ} & test GPU Wait-Free \\
% \bottomrule
% \end{tabular}
% \end{table}

\subsection{\textbf{Application workloads}}
Irregular workloads that exhibit contention-heavy coordination are a natural use case for GPU concurrent queues. We therefore two applications to validate queue behavior beyond synthetic micro-benchmarks.
 % So we look into two applications to validate why a concurrent queue on GPU is an interesting topic beyound pure academic pursuits.
 % Some of the interesting use cases for a concurrent queues on GPU would be where irregular control meets contentious concurrency requirements.
\paragraph{\textbf{Level-synchronous BFS}}
We evaluate the queues in a level-synchronous BFS over graphs stored in CSR (Compressed Sparse Row) format. At each level, the current frontier is dequeued, outgoing neighbors are examined, and newly visited vertices are marked and are enqueued into the next frontier. We alternate between two queues across BFS levels until the next frontier is empty. We compare against Gunrock's BFS from the HIP-develop branch, using Gunrock's default parameters. We report BFS runtime (ms), number of BFS levels, and total edges scanned as we sweep thread counts from $2^{9}$ to $2^{13}$ with a block size of 256.

% For the persistent queue variants, queues are not reset between BFS levels; instead, they are reinitialized only between independent BFS runs. Frontier size is read through pinned host memory rather than an explicit device-to-host copy. 

% \textit{Setup.} We run BFS on CSR graphs loaded from Matrix Market inputs; unless otherwise stated, the source vertex is $v=0$. 

\paragraph{\textbf{Tile-based wavefront Ray Tracing}}
To compare against \textbf{stream-compaction}~\cite{active-compaction}, we evaluate the queues in a tile-based persistent wavefront ray tracer. A $W \times H$ image is partitioned into $T_x \times T_y$ tiles, and each tile owns its own queue. Primary rays are generated and enqueued per tile, and a persistent tracing kernel repeatedly dequeues work, traces rays, shades hits, and re-enqueues reflective bounces into the same tile queue until no work remains. We report ray-tracing throughput in \emph{\textbf{MRays/s}} (Millions of Rays per second) and compare each queue against the stream-compaction baseline using relative throughput. The two scenes are: (1) \textbf{Complex Scene}, with 100 spheres on a plane and two-bounce reflections, and (2) \textbf{Cornell-box Scene}, with two spheres, four reflections, a plane, and three walls.

% To compare against the baseline of Stream Compaction in ray tracing, we evaluate the queues in a tile-based persistent wavefront ray tracer. A $W \times H$ image is partitioned into $T_x \times T_y$ tiles, and each tile owns its own queue. The system does the following:
% (1) \textbf{Generates primary rays} for each tile and enqueues tile-local work items into that tile's queue. Then, (2) \textbf{Persistent tracing kernel}, in which threads repeatedly dequeue work, trace rays against the scene, shade hits, and, when a reflective bounce is required, re-enqueue the resultant item into the same tile queue. Execution continues until work counter reaches zero.

% The wavefront ray tracing benchmark keeps work in the queue layer while avoiding host-side synchronization between bounces, evaluating whether queue-driven work management can remain efficient inside a wavefront-style ray-tracing pipeline. We use the following scenes for the test:

% % The image resoltion is $1280 \times 720$, that's split into a $8 \times 8$ tile grid. For the test, we configure the following scenes for the test:
% \begin{itemize}
%     \item \textbf{Complex Scene:} 100 spheres on a plane of varying sizes and two-bounce reflections.
%     \item \textbf{Cornell-box Scene:} a highly reflective scene with two spheres, four reflections, a plane, and three walls.
% \end{itemize}

\begin{figure*}[!t]
  \centering
  \includegraphics[width=0.92\textwidth]{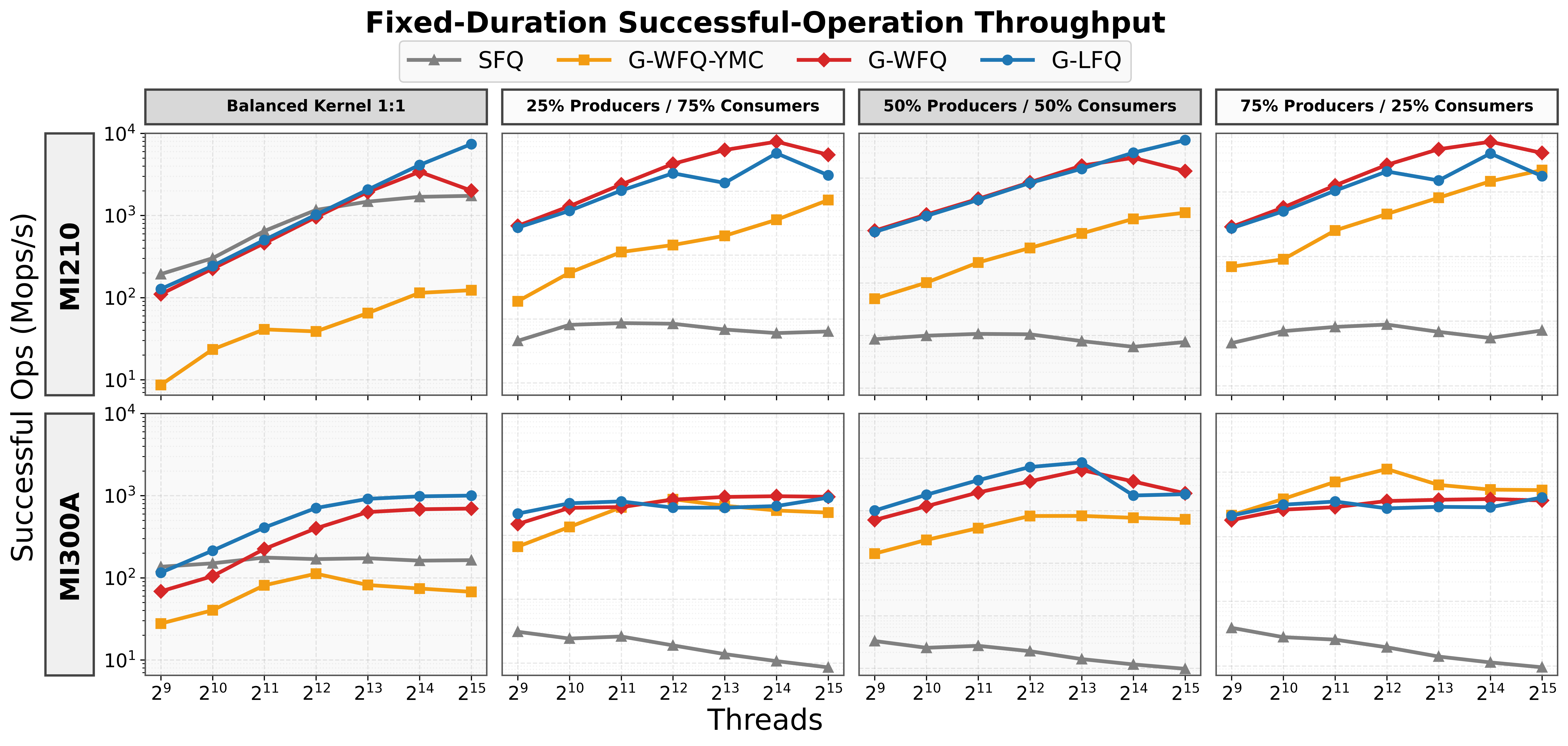}
  \vspace{-8pt}
  \caption{
  Fixed-duration successful-operation throughput (Mops/s) across four queue micro-benchmarks on AMD MI210 and MI300A: balanced kernel (1:1 enqueue/dequeue) and split producer/consumer kernels (25\%/75\%, 50\%/50\%, and 75\%/25\%). G-LFQ achieves the highest throughput in most configurations. On MI210, G-WFQ remains closer to G-LFQ under asymmetric producer/consumer splits. On MI300A, G-WFQ's throughput falls less sharply at high thread counts, while G-LFQ more often peaks earlier and drops under asymmetric splits. G-WFQ-YMC remains slower overall, but still similar to the bounded ring designs in split workloads, especially on MI300A. SFQ is only competitive in the balanced kernel.}
  \vspace{-8pt}
  \label{fig:mops}
\end{figure*}

\subsection{Profiling setup}
\label{sec:profiling}

We profile the \emph{throughput tests} using \texttt{rocprofv2}, focusing on two kernels that stress different contention regimes:
\textbf{balanced\_kernel} (enqueue/dequeue alternation) and
\textbf{split\_kernel} (producers at 25\%/50\%/75\%  unless otherwise stated).

% \begin{quote}
% \small
% \texttt{rocprofv2 {-}{-}plugin file -i rocprof/kpi\_passes/occ.in -d out/mi300a ./bench}
% \end{quote}

% \noindent
% Each metric file specifies a small set of hardware counters and the kernels of interest. For example, the occupancy and compute pass used to extract VALU activity is:
% \begin{quote}
% \small
% \texttt{pmc: OccupancyPercent, MeanOccupancyPerCU, SQ\_INSTS\_VALU}\\
% \texttt{kernel: simple\_test\_kernel high\_contention\_kernel}
% \end{quote}

\paragraph{\textbf{Normalization by successful operations}}
Raw hardware counters on GPUs primarily reflect \emph{execution activity}, not algorithmic progress. In contended queues, retries and spinning  may lead to a large execution of instructions while making little progress. We normalize all primary metrics we collect as shown in Table~\ref{tab:basecounters_new} by the number of \emph{successful queue operations} for the true cost.

A \emph{\textbf{successful operation}} is defined as an enqueue or dequeue that completes and commits its effect to the queue state. We exclude from this: failed retries, speculative attempts, and empty dequeues. This makes comparisons across queue designs and contention regimes more meaningful by converting raw counters into \emph{per-operation costs}. 

\paragraph{\textbf{Key metrics}}
We focus on two normalized metrics as our primary indicators of efficiency and scalability as shown by Table~\ref{tab:derivedcounters_new}:

\begin{itemize}
  \item \textbf{Wait per successful op}: captures how much wavefront waiting a queue operation induces.
  \item \textbf{VALU per successful op}: captures how much useful vector computation is expended per completed operation.
\end{itemize}

\noindent

\begin{table}[t]
  \centering
  \caption{Base counters sampled with \texttt{rocprofv2}. Short names are used in plots.}
  \label{tab:basecounters_new}
  \small
  \begin{tabular}{ll}
    \toprule
    Short name & ROCm counter (sum over kernel) \\
    \midrule
    \textbf{WAIT}      & \texttt{SQ\_WAIT\_ANY} \\
    \textbf{VALU}      & \texttt{SQ\_INSTS\_VALU\_INT(32|64)} \\
    \textbf{WAVE\_CYC} & \texttt{SQ\_WAVE\_CYCLES} \\
    \textbf{SUCC}      & successful enqueue + dequeue count \\
    \bottomrule
  \end{tabular}
\end{table}

\begin{table}[t]
  \centering
  \caption{Derived metrics used in evaluation. All values are aggregated per kernel and normalized by successful operations.}
  \label{tab:derivedcounters_new}
  \small
  \begin{tabular}{lp{6.0cm}}
    \toprule
    Metric & Definition and interpretation \\
    \midrule
    WAIT/op &
    $= \textbf{(WAIT/WAVE\_CYC)} / \textbf{SUCC}$.
    Average scheduler wave stall fraction per successful op. High values indicate backpressure, serialization, or prolonged spinning. \\[0.4ex]

    VALU/op &
    $= \textbf{VALU} / \textbf{SUCC}$.
    Vector ALU instructions executed per successful operation. Captures retry overhead and wasted computation under contention. \\[0.4ex]
    \bottomrule
  \end{tabular}
  \vspace{-10pt}
\end{table}

% \paragraph{\textbf{Why wait/op and VALU/op}}
% Raw VALU instruction counts increase with grid size and occupancy even when useful work does not. In the same way, absolute wait cycles grow with kernel duration, obscuring whether waiting is proportional to progress or contention. To remove these effects, we normalize both metrics by successful operations, giving us the \emph{cost per unit of progress}.

% WAIT/op reflects how efficiently the queue avoids wavefront stalls caused by atomic contention, while VALU/op captures the amount of computation expended to complete an operation, including retries and failed fast paths. Together, these metrics provide a concise view of queue behavior without conflating performance with launch configuration or kernel runtime.

\paragraph{\textbf{Why wait/op and VALU/op}}
When using the above metrics, we can distinguish queues that are inefficient, either due to waiting or retrying while trying to make forward progress. Normalizing by \emph{successful operations} isolates contention (WAIT/op) and computational overhead (VALU/op) as pure per-progress-unit costs.

% \FloatBarrier

\begin{figure*}[!t]
  \centering
  \subfloat[MI210 Profiling Metrics Per Successful Operation]{%
    \includegraphics[width=0.92\textwidth]{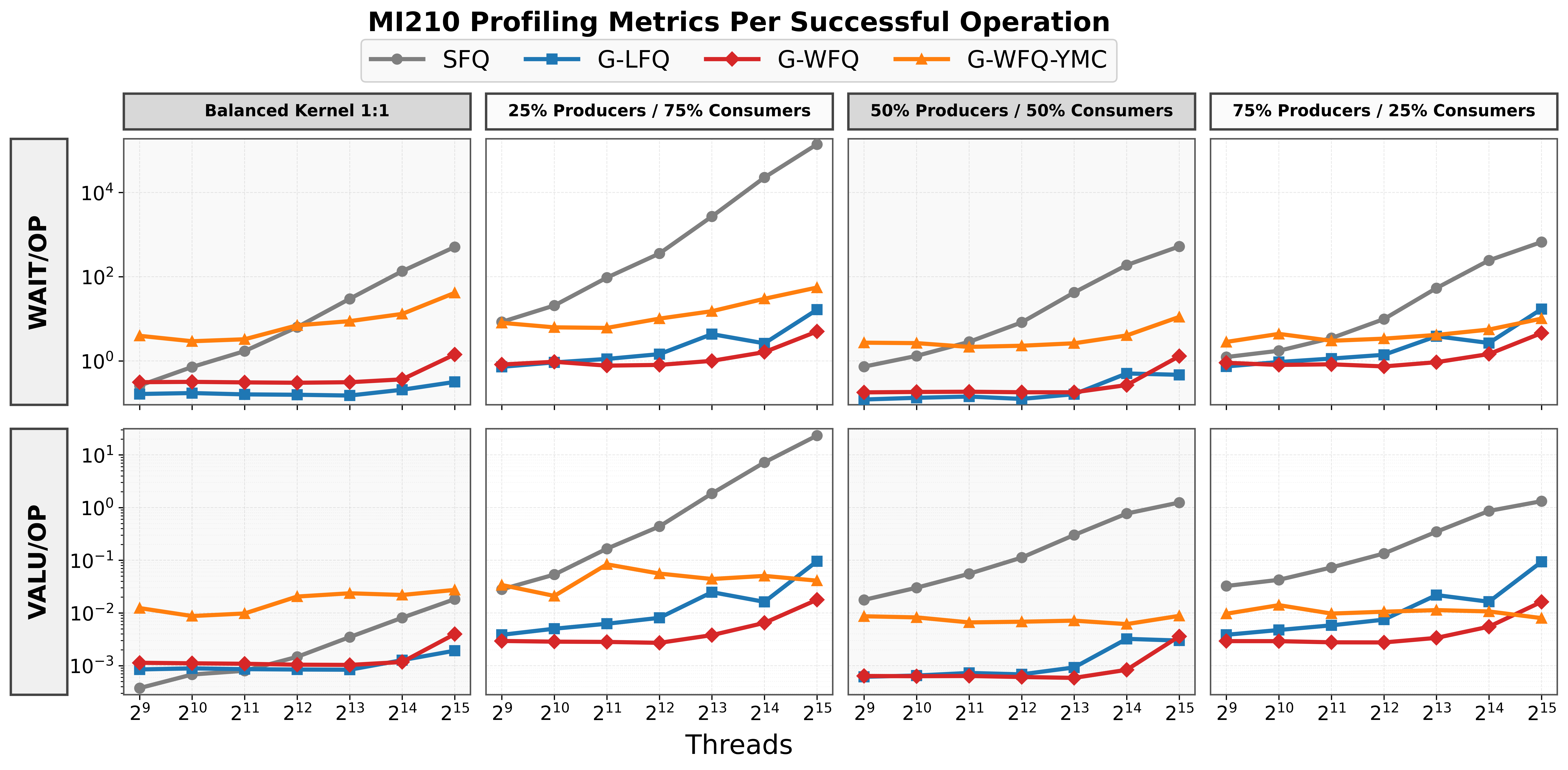}%
    \label{fig:mi210_profile}
  }\\
  \subfloat[MI300A Profiling Metrics Per Successful Operation]{%
    \includegraphics[width=0.92\textwidth]{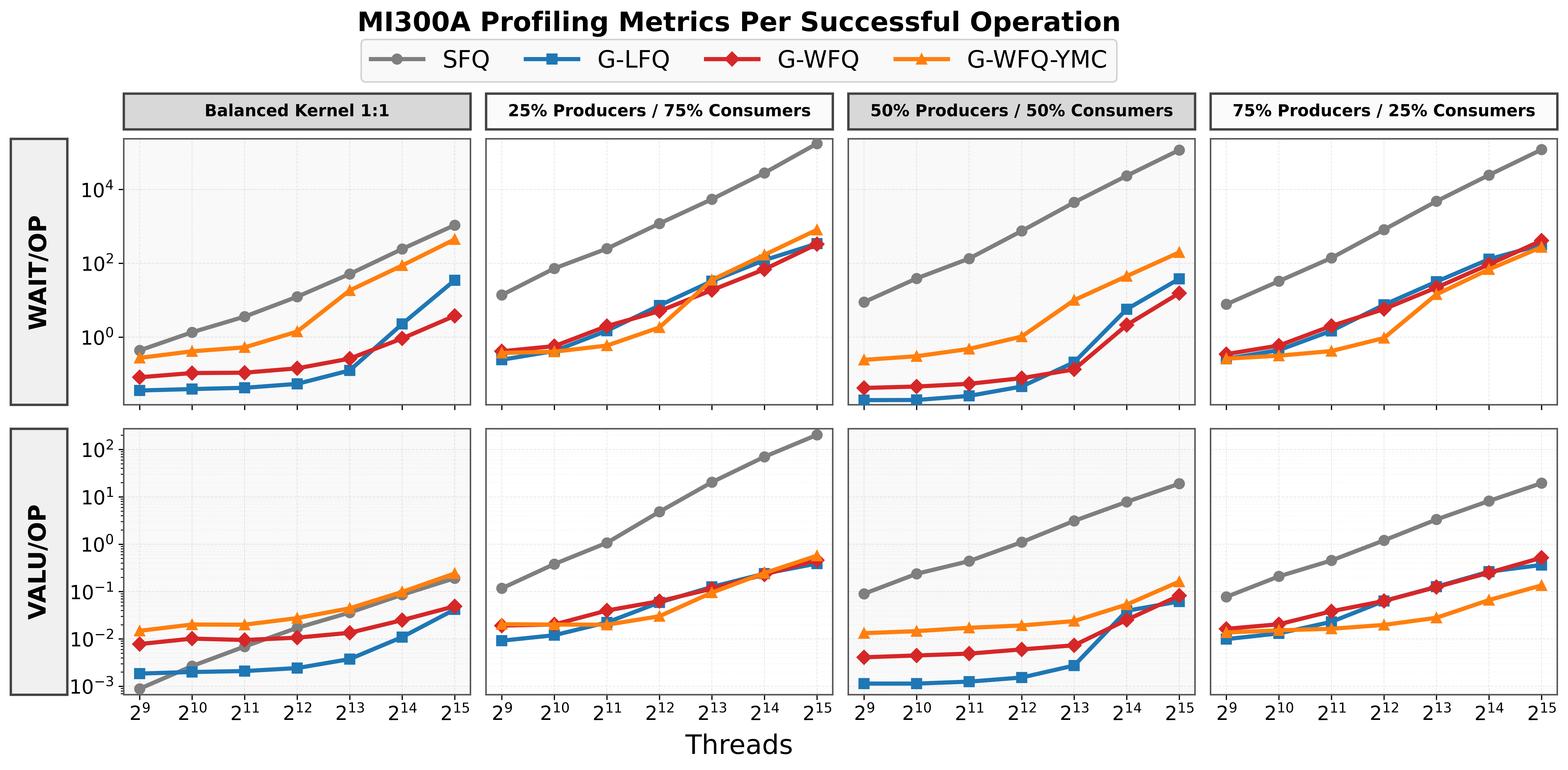}%
    \label{fig:mi300a_profile}
  }
  \vspace{-0.7em}
  \caption{
  Per-operation profiling metrics across four queue micro-benchmarks on AMD MI300A and MI210. Each row shows WAIT/op and VALU/op for the balanced kernel (1:1 enqueue/dequeue) and split producer/consumer kernels (25\%/75\%, 50\%/50\%, and 75\%/25\%). On MI300A, G-WFQ and G-LFQ have similar WAIT/op through much of the thread-count sweep, but G-LFQ's WAIT/op rises more sharply at the largest thread counts. On MI210, G-WFQ and G-LFQ remain close in WAIT/op, while G-WFQ-YMC incurs higher VALU/op. SFQ has the highest per-operation cost, especially under split workloads.}
  \vspace{-8pt}
  \label{fig:profiling_all}
\end{figure*}
%\vspace{-2pt}

\section{Evaluation}
%\vspace{-2pt}
We evaluate G-WFQ-YMC, G-WFQ, and G-LFQ in two ways: micro-benchmark throughput and performance in two applications, level-synchronous breadth first search and wavefront ray tracing. We then use profiling metrics to explain the throughput trends and compare the application results against Gunrock and a stream-compaction ray tracing baseline.

% The evaluation of concurrent queue implementations (G-WFQ-YMC/G-WFQ/G-LFQ) focuses on two main aspects . Performance testing with throughput tests kernels and the second being performance on two real world applications: level synchronous breadth first search and wavefront ray tracing. We will look at the analysis of how each queue performs on differing test conditions and analyze the results by looking at profiling metrics. On application testing, we look at comparing our queues performance against industry standards like Gunrock and an implementation of ray tracing algorithms like Stream Compaction.

% \label{sec:prod_perf_evaluation}

\subsection{Throughput Summary}
Fig.~\ref{fig:mops} shows throughput (Mops/s) versus thread count on MI300A and MI210 for balanced kernel (1:1 enqueue/dequeue) and split producer/consumer kernels. Overall, the bounded ring queues dominate throughput. G-LFQ is the strongest design in the balanced kernel test and in the split (producer/consumer) test that keeps it near-balanced, while G-WFQ sustains higher throughput under asymmetric split test (i.e., producer/consumer at 25\%/75\% or 75\%/25\%) which places the queue in near-empty or near-full conditions. These performance trends are also sensitive to architecture: on MI210, G-LFQ is strongest whenever the workload remains close to balanced, whereas on MI300A its advantage becomes less evident at the largest thread counts. G-WFQ-YMC remains slower, but becomes more competitive in producer-heavy splits on MI300A. SFQ is only competitive in the balanced kernel and throughput declines sharply under split workloads.

\vspace{-8pt}

% Figures~\ref{fig:mops} show throughput (Mops/s) versus thread count on MI300A and MI210 for balanced (1:1 enqueue/dequeue) and split producer/consumer kernels. The bounded ring queues (G-WFQ, G-LFQ) deliver the highest throughput overall, with G-WFQ offering the best sustained performance at scale. G-WFQ-YMC scales more modestly and is architecture-sensitive. SFQ holds up under balanced workloads but collapses in producer/consumer splits on both GPUs.

% \begin{figure}[!t]
%  \vspace{-10pt}
%   \centering
%   \includegraphics[width=0.98\columnwidth]{figures/throughput_cpu_vs_gpu_wfq_2x2.png}
%   \vspace{-0.7em}
%   \caption{G-WFQ-YMC CPU-GPU comparison: WFQ (CPU) initially performs well, it plateaus at the 128-thread AMD EPYC 7513 thread limit, whereas G-WFQ-YMC (GPU) demonstrates superior scaling.}
%   \vspace{-10pt}
%   \label{fig:cpu-wfq}
% \end{figure}

\subsection{Key Observations}
\vspace{-2pt}
\subsubsection{Balanced Kernel 1:1 Test}
In the balanced kernel, the bounded ring designs clearly separate from the two baselines on MI210/MI300A. G-LFQ delivers the highest throughput, which indicates that its wave-batched fast path is effective when enqueue and dequeue demand are balanced. G-WFQ remains close behind, showing that the added overhead required for wait-freedom does not eliminate the advantages of its wave-batched fast-path and bounded ring structure. By contrast, G-WFQ-YMC carries a higher structural overhead, and SFQ remains limited by its more serialized dequeue behavior.

\subsubsection{Split (Producer/Consumer) Kernel}
Under asymmetric loads or conditions that place the queue at near empty or near full, the advantage shifts from peak throughput to the ability to make progress. On MI210, under the near-empty and near-full regimes, the G-WFQ outperforms the other queues, while the G-LFQ remains strongest near the even 50\% producer-consumer split. On MI300A, the same bounded ring family remains strongest overall, but G-LFQ degrades at the largest thread counts, whereas G-WFQ degrades more gracefully. G-WFQ-YMC is still not the fastest design overall, but it becomes relatively more competitive in producer-heavy splits on MI300A. SFQ performs poorly in all split configurations, which is consistent with a design that is much less tolerant of asymmetric contention.

% \FloatBarrier

\vspace{-6pt}
\subsection{Profiling Analysis}
\vspace{-2pt}
The profiling figures (Figs.~\ref{fig:mi210_profile} and~\ref{fig:mi300a_profile}) explain the throughput ordering in Fig.~\ref{fig:mops}. We use G-LFQ as the reference point because it is the fastest design in the balanced kernel and near-balanced split tests.

% The profiling figures (Fig.~\ref{fig:mi300a_profile},Fig.~\ref{fig:mi210_profile}) show per-operation wait cycles (WAIT/op) and vector ALU instructions (VALU/op) that explain the throughput ordering. We use G-LFQ as the baseline for relative comparisons.

\paragraph{\textbf{G-LFQ}} has the lowest per-operation cost on MI210, which indicates that wave-batched fetch-and-add is effective when the hardware handles the atomic contention on the shared counters efficiently. On MI300A, the dominant change is not a comparable rise in VALU/op, but a much larger increase in WAIT/op. This indicates that G-LFQ tends to stall on CDNA3, likely due to higher atomic-operation latency on MI300A's memory subsystem, which explains why it retains strong peak throughput at moderate thread counts, but loses performance as contention increases.

\paragraph{\textbf{G-WFQ}} remains close to G-LFQ in per-operation cost on both architectures, but shows a smaller WAIT/op penalty on MI300A. That lower WAIT/op growth is consistent with the throughput results: G-WFQ does not always achieve the highest peak throughput, but its throughput degradation is less rapid than G-LFQ in the highest-thread-count and asymmetric regimes (i.e., which put the queues under near empty or near full condition). On MI300A, the higher atomic latency means threads enter the slow path less frequently, reducing slow-path overhead and partially explaining G-WFQ's more graceful degradation on CDNA3.

\paragraph{\textbf{G-WFQ-YMC}} incurs a higher instruction cost for every successful queue operation than the bounded ring designs, due to its more complex helping and segment-based structure, which increases instruction cost, especially on queue retries. However, its wait cost tracks the bounded designs more closely on MI300A than on MI210, which explains why it becomes relatively more competitive on CDNA3 even though its absolute per-operation cost remains higher.

\paragraph{\textbf{SFQ}} operates in a different regime altogether. Its per-operation cost is dominated by the serialization of its dequeue side, which drives both WAIT/op and VALU/op well above the bounded ring designs. This explains why SFQ remains acceptable only in the balanced kernel and collapses under split producer/consumer workloads.

% \FloatBarrier

% \begin{figure*}[!t]
%   \centering
%   \includegraphics[width=0.98\textwidth]{figures/bfs_best_speedup_centered_split_1x2_triple.png}
%   \vspace{-0.7em}
%   \caption{Breadth First Search Time(ms) Against Gunrock: the best time(ms), for each queue type on 9 graphs as shown against gunrock. The G-LFQ consistently beats gunrock, while the G-WFQ achieves faster times than gunrock in some graphs. The G-WFQ-YMC scales well but does not manage to beat gunrock times while the SFQ is consistently slower by many orders of magnitude. 
% }
%   \label{fig:bfs-best-times-vs-gunrock}
% \end{figure*}

\begin{table}[t]
\centering
\caption{Graph inputs used in the BFS evaluation.}
\label{tab:bfs-inputs}
\scriptsize
\begin{tabular}{lrrr}
\toprule
Graph & Vertices & Edges & Avg. out-degree \\
\midrule
\texttt{ak2010}              & 45,292     & 217,098      & 4.79  \\
\texttt{belgium\_osm}        & 1,441,295  & 3,099,940    & 2.15  \\
\texttt{kron\_g500-logn21}   & 2,097,152  & 182,081,864  & 86.82 \\
\texttt{delaunay\_n21}       & 2,097,152  & 12,582,816   & 6.00  \\
\texttt{hollywood-2009}      & 1,139,905  & 112,751,422  & 98.91 \\
\texttt{roadNet-CA}          & 1,971,281  & 5,533,214    & 2.81  \\
\texttt{road\_usa}           & 23,947,347 & 57,708,624   & 2.41  \\
\texttt{europe\_osm}         & 50,912,018 & 108,109,320  & 2.12  \\
\texttt{delaunay\_n24}       & 16,777,216 & 100,663,202  & 6.00  \\
\bottomrule
\end{tabular}
\vspace{-16pt}
\end{table}

\vspace{-2pt}
\subsection{Level-Synchronous BFS Performance}
Table~\ref{tab:bfs-inputs} summarizes the graph inputs which are taken from the Suitesparse matrix collection~\cite{suitesparse} used in the BFS evaluation, and Fig.~\ref{fig:bfs-best-times-vs-gunrock} reports each queue's best runtime across the thread-count sweep relative to Gunrock from $2^9$ to $2^{13}$ (performance degraded at higher counts for all designs). The main result shows that the bounded ring queues remain competitive against the baseline of the GPU graph framework while providing stronger queue semantics. Across the nine graphs, G-LFQ is the strongest overall design, and G-WFQ remains in a similar performance range on most input graphs. The largest gains appear on several MI210 graphs, whereas on MI300A, the bounded ring queues more often remain close to Gunrock rather than outperforming it. G-WFQ-YMC remains viable but generally trails the bounded ring designs, which suggests that the efficiency of the bounded ring designs carries over better to the irregular graph frontier management than the segment-based unbounded queue. SFQ, by contrast, is consistently much slower, matching the serialization effects already visible in the micro-benchmarks, Fig.~\ref{fig:mops}. Overall, the BFS results show that concurrent queues with strong progress guarantees can remain practical in graph traversal workloads rather than only in synthetic kernels.

% \begin{figure}[H]
%   \centering
%   \subfloat[\text{Complex spheres, 2 bounces.\label{fig:rt-tp-complex}}]{%
%     \includegraphics[width=0.48\textwidth]{figures/rt_queue_relative_vs_compaction_complex_b2_1x2.png}
%   }\hfill
%   \subfloat[\text{Cornell box, 4 reflections.\label{fig:rt-tp-cornel}}]{%
%     \includegraphics[width=0.48\textwidth]{figures/rt_queue_relative_vs_compaction_cornell_b4_1x2.png}
%   }
%   \vspace{-0.7em}
%   \caption{Ray-tracing throughput relative to stream compaction. On MI210, G-LFQ consistently outperforms compaction across both scenes, while G-WFQ is near parity on the simpler scene and weaker on the more reflection-heavy scene. On MI300A, G-LFQ and G-WFQ both substantially exceed compaction, while SFQ remains well below baseline.}
%   \vspace{-8pt}
%   \label{fig:rt-tp}
% \end{figure}

\begin{figure*}[!t]
  \centering
  \includegraphics[width=0.98\textwidth]{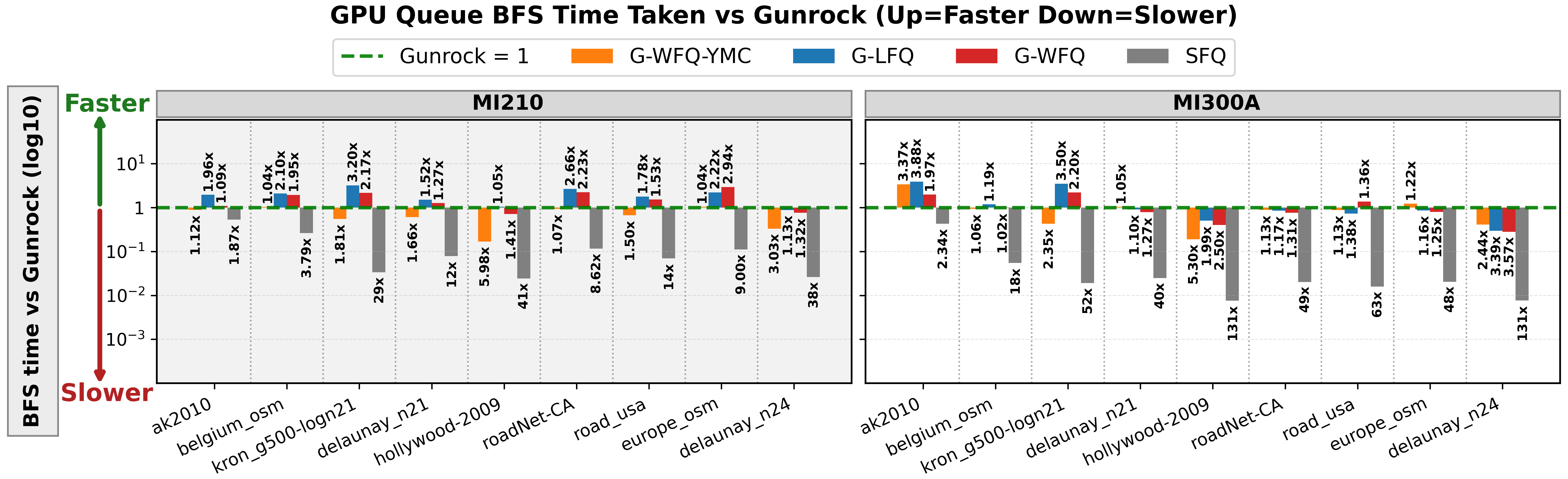}
  \vspace{-0.7em}
  \caption{BFS runtime relative to Gunrock across 9 graphs. G-LFQ consistently beats Gunrock, while G-WFQ is faster on several graphs. G-WFQ-YMC scales well but does not surpass Gunrock, and SFQ is slower by orders of magnitude.}
  \vspace{-8pt}
  \label{fig:bfs-best-times-vs-gunrock}
\end{figure*}

\begin{figure*}[!t]
\vspace{-2pt}
  \centering
  \includegraphics[width=0.98\textwidth]{figures/rt_queue_relative_vs_compaction_all_1x4.png}
  \vspace{-0.7em}
  \caption{Ray-tracing throughput relative to stream-compaction across both scenes and GPUs. On MI210, G-LFQ outperforms stream-compaction while G-WFQ approaches parity; on MI300A, all queues fall below compaction, with G-LFQ and G-WFQ the closest.}
  \vspace{-8pt}
  \label{fig:rt-compact-2x2}
\end{figure*}

% \FloatBarrier
% \subsection{Wavefront Ray Tracing with Concurrent Queues}
% Figures~\ref{fig:rt-tp-complex} and~\ref{fig:rt-tp-cornel} show queue throughput relative to stream compaction (COMPACT~\cite{active-compaction}) on a log-scaled y-axis, where the dashed line at 1$\times$ marks the COMPACT baseline. We evaluate two scenes: complex spheres with 2 bounces and Cornell's box with 4 reflections. COMPACT gathers active rays into dense arrays between passes, trading queue overhead for global synchronization.

\vspace{-5pt}
\subsection{Wavefront Ray Tracing with Concurrent Queues}
Fig.~\ref{fig:rt-compact-2x2} compares queue-driven wavefront ray tracing against stream-compaction~\cite{active-compaction} across two scenes and thread counts from $2^9$ to $2^{13}$. The outcome is strongly architecture-dependent. On \textbf{MI210}, G-LFQ is the strongest design and exceeds the compaction baseline across both scenes. G-WFQ remains competitive on the simpler workload but loses ground on the more reflection-heavy scene. G-WFQ-YMC and SFQ remain well behind the bounded ring designs throughout. On \textbf{MI300A}, stream compaction beats all the queue designs. Even so, G-LFQ and G-WFQ remain the closest queue-based alternatives, whereas G-WFQ-YMC becomes less reliable at higher thread counts and SFQ remains substantially weaker. 

Taken together, these results suggest that queue-based wavefront scheduling is most effective when the cost of global synchronization dominates the cost of queue operations.

 \section{Future Work}
%\vspace{-3pt}
\label{sec:futurework}
%The current evaluation suggests that 
Concurrent queue-driven work distribution is promising especially in irregular GPU pipelines, as indicated by the work in Broker Queue~\cite{kerbl2018}. The next step would be to study and compare queues with progress guarantees against those without relaxed alternatives. We tested Broker Queue's linearizable variant but it stalled under sustained application pressure in both BFS and ray tracing; the non-linearizable variant completed ray-tracing runs but could not maintain FIFO semantics for BFS frontier management. A controlled comparison of linearizable and relaxed queues across application workloads remains future work. Although the current design preserves the strong progress guarantees, slow-path activation can still impose substantial overhead. Future work should therefore focus on reducing the cost of slow-path publication and completion on GPUs. 
%Beyond a single node, it is also worth exploring whether concurrent queues can serve as useful coordination primitives in multi-GPU supercomputers and heterogeneous systems.
% when patience, thresholds or helper delays are poorly matched to the architecture and workload. 

\section{Conclusion}
\label{sec:conclusion}

This paper presented GPU-aware concurrent queues with strong progress guarantees and bounded memory: the lock-free G-LFQ, the wait-free G-WFQ, and G-WFQ-YMC as an adapted reference baseline. Across fixed-duration micro-benchmarks on MI210 and MI300A, the bounded ring designs delivered the strongest overall performance and efficiency. G-LFQ achieved the highest peak throughput in several settings, while G-WFQ was the most robust across architectures and workload mixes, sustaining performance under contention and degrading more gracefully at high thread counts. Profiling showed that these differences are explained largely by the per-operation wait cost and atomic overhead. In level-synchronous BFS, G-LFQ and G-WFQ matched or exceeded Gunrock on several graphs, and in tile-based wavefront ray tracing the queue-driven designs were competitive with stream-compaction. Overall, the results show that linearizable, bounded GPU queues with strong progress guarantees can be both practical and high-performance.

\section{Acknowledgement}
The manuscript was drafted by the authors. ChatGPT 5.4 and Claude Sonnet 4.6 were used to identify spelling, and tone issues, with all suggestions reviewed and selectively included to improve clarity and polish.

% \section*{Acknowledgement}
% The work was supported in part by NSF I/UCRC CNS-
% 1822080 via the NSF Center for Space, High-performance, and Resilient Computing (SHREC).

\balance

%\input{submission/intro}
 %%\input{background}
 %\input{submission/Related}
 %\input{submission/metric}
 %\input{submission/Implementation}
 %\input{submission/Evaluation}

 %\input{submission/Conclusion_Future_work}
%\section*{Acknowledgements}
%The work detailed herein has been supported in part by NSF I/UCRC CNS-1822080 via the NSF Center for Space, High-performance, and Resilient Computing (SHREC).
%\setlength\floatsep{1.25\baselineskip plus 3pt minus 2pt}
%\setlength\textfloatsep{1.25\baselineskip plus 3pt minus 2pt}
%\setlength\intextsep{1.25\baselineskip plus 3pt minus 2 pt}
% \input{camera_ready/intro}
% \input{camera_ready/Related}
% \input{camera_ready/Implementation}
% \input{camera_ready/Optimizations}
% \input{camera_ready/Evaluation}
% \input{camera_ready/Conclusion_Future_work}

\Urlmuskip=0mu plus 1mu\relax
\bibliographystyle{./bibliography/IEEEtran}
\bibliography{./bibliography/IEEEabrv,./bibliography/IEEEexample}

\vspace{12pt}
\end{document}